\newtheorem{theorem}{Theorem}[section]
\newtheorem{definition}[theorem]{Definition}
\newtheorem{example}[theorem]{Example}
\newtheorem{corollary}[theorem]{Corollary}
\newtheorem{lemma}[theorem]{Lemma}
\newtheorem{proposition}[theorem]{Proposition}
\newtheorem{remark}[theorem]{Remark}
\newcommand{\error}{\operatorname{error}}
\DeclareMathOperator{\adom}{adom}
\newenvironment{proof}{\par\noindent\textbf{Proof.}}{\hfill$\Box$\par\bigskip\par}
\newcommand{\revnote}[1]{#1}
\journal{Information Processing Letters}
\begin{document}

\begin{frontmatter}

\title{On the non-efficient PAC learnability of conjunctive queries}

\author[1]{Balder ten Cate\texorpdfstring{\fnref{fn1}}{}}

\fntext[fn1]{Research supported by the European Union’s Horizon 2020 research and innovation programme (MSCA-101031081). We thank Victor Dalmau for helpful feedback on an early version of this note. }

\affiliation[1]{organization={ILLC, University of Amsterdam},
            addressline={Postbus 94242}, 
            city={Amsterdam},
            postcode={1090 GE}, 
            country={The Netherlands}}

\author[2]{Maurice Funk}

\affiliation[2]{organization={Leipzig University},
            addressline={Augustusplatz 10}, 
            city={Leipzig},
            postcode={04109}, 
            country={Germany}}

\author[3]{Jean Christoph Jung}

\affiliation[3]{organization={TU Dortmund University},
            addressline={August-Schmidt-Stra\ss{}e 1}, 
            city={Dortmund},
            postcode={44227}, 
            country={Germany}}

\author[2]{Carsten Lutz}

\begin{abstract}
This note serves three purposes: (i) we provide a  self-contained exposition of the fact that conjunctive queries are not efficiently learnable in 
the \emph{Probably-Approximately-Correct (PAC)} model,
paying clear attention to the  
complicating fact that 
this concept class lacks the \emph{polynomial-size fitting property}, a property that is tacitly assumed in much of the computational learning theory literature;
(ii) we establish a strong negative PAC learnability result that applies to many restricted classes of conjunctive queries (CQs), including acyclic CQs for a wide range of notions of acyclicity;
(iii) we show that CQs (and UCQs) are 
efficiently PAC learnable with membership queries.
\end{abstract}

\begin{keyword}
Computational Learning Theory \sep 
Conjunctive Queries \sep
Inductive Logic Programming



\end{keyword}

\end{frontmatter}


\section{Introduction}\label{sec:intro}

Conjunctive queries (CQs) are an extensively studied database query language
that plays a prominent role in database theory. CQs correspond precisely to 
Datalog programs with a single non-recursive rule
and to the positive-existential-conjunctive fragment of first-order logic. 
Since the evaluation problem for conjunctive queries is NP-complete, various tractable subclasses have been introduced and studied. These include
different variants of acyclicity, such as $\alpha$-acyclicity, $\beta$-acyclicity, $\gamma$-acyclicity, and Berge-acyclicity, which
form a
strict hierarchy with Berge-acyclicity being most restrictive  \cite{Fagin83:acyclicity}. A landmark result by Grohe states
that a class of CQs is tractable if and only if the treewidth
of all CQs in it is bounded by a constant (under certain assumptions)~\cite{DBLP:journals/jacm/Grohe07,DBLP:journals/jacm/Marx13}.
%

In this note, we consider the learnability
of  CQs from labeled examples, in Valiant's well-known \emph{Probably Approximately Correct} (PAC) learning model~\cite{Valiant84:pac}.
We give a self-contained proof that the class of all CQs as well as all classes of acyclic CQs mentioned above are \emph{not} efficiently PAC learnable. While the
general idea of our proof is due to~\cite{DBLP:conf/ecml/Kietz93,Haussler89},
we strengthen the result in several respects and present it in a form that is easily
accessible to modern-day database theorists.

\begin{figure*}
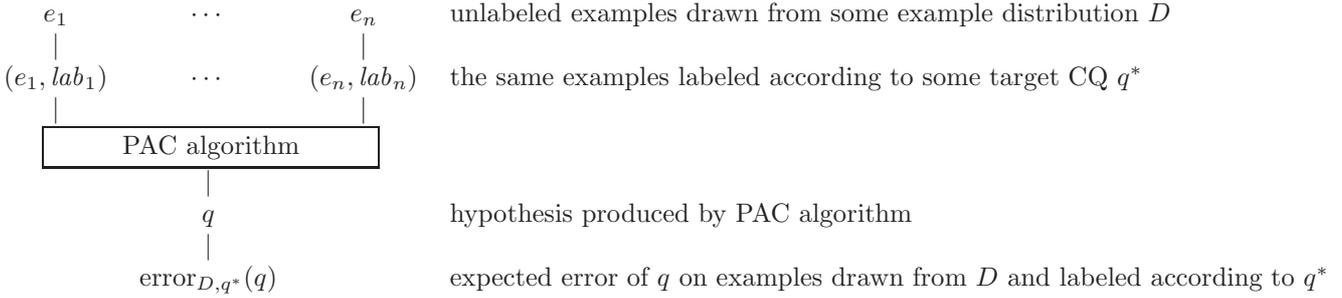

{
\centering
\begin{tabular}{cccl}
$e_1$ & $\cdots$ & $e_n$ & unlabeled examples drawn from some example distribution $D$ \\
$|$ && $|$ & \\
$(e_1,\mathit{lab}_1)$ & $\cdots$ & $(e_n,\mathit{lab}_n)$ & the same examples labeled according to some target CQ $q^*$ \\
$|$ && $|$ & \\
\multicolumn{3}{c}{\fbox{~~~~~~~ PAC algorithm ~~~~~~~}} \\
& $|$ & \\
& $q$ && hypothesis produced by PAC algorithm \\
& $|$ & \\
& $\error_{D,q^*}(q)$ && expected error of $q$ on examples drawn from $D$ and labeled 
according to $q^*$
\end{tabular}
}


\caption{Graphical depiction of a PAC algorithm}
\label{fig:pac}
\end{figure*}

The result $q(I)$
of evaluating a $k$-ary CQ $q$ on a database instance $I$ is a 
set of $k$-tuples of values from the active domain of~$I$. 
An \emph{example}, then, is most naturally taken to be a
pair $(I,\textbf{a})$ where $I$ is a database instance and 
$\textbf{a}$ is a $k$-tuple of values from the active domain of $I$. The example is \emph{positive}  
if $\textbf{a}\in q(I)$ and  \emph{negative} otherwise.

An \emph{efficient PAC algorithm} is a (possibly randomized) polynomial-time algorithm that takes as input a 
set of examples 
drawn from an unknown probability
distribution~$D$ and labeled as positive/negative
according to an unknown target CQ $q^*$ to be learned,
and that outputs a CQ $q$,
such that,
if the input sample is sufficiently large,
then with probability at least $1-\delta$, $q$ has expected
error at most $\epsilon$, meaning that if we draw an example $e$ from~$D$, then with probability $1-\epsilon$, $q$ and $q^*$ assign
the same label to $e$
(cf.~Figure~\ref{fig:pac}).
The required number of examples must furthermore be bounded by a
 function polynomial in $|q^*|$, $1/\delta$, $1/\epsilon$, and the example size.
We give a precise definition in Section~\ref{sec:preliminaries}.
Note that since a PAC algorithm does not know the example distribution $D$, it must perform well
for \emph{all} distributions $D$.
 In this sense, the PAC  model captures a strong
form of distribution-independent learning.

Our main result is the following, stated, for simplicity, for
unary CQs:
\begin{theorem} \label{thm:main} (assuming $\text{RP} \neq \text{NP}$)
Let $C$ be any class of unary CQs over a fixed
schema $\mathbf{S}$ that contains at least one binary relation symbol and one unary relation symbol. 
If $C$ includes
all path-CQs, then $C$
    is not efficiently PAC learnable, even w.r.t.~single-instance example distributions.
\end{theorem}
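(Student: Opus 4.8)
The plan is to show that an efficient PAC learner for $C$ would yield a randomized polynomial-time decision procedure for an $\text{NP}$-complete problem, contradicting $\text{RP} \neq \text{NP}$. The standard bridge is that efficient PAC learnability entails an efficient randomized \emph{consistency test}: run the learner against the uniform distribution over a given labeled sample $S$, with $\epsilon$ just below $1/|S|$ and $\delta = \tfrac{1}{4}$; then with probability at least $\tfrac{3}{4}$ the returned hypothesis has error at most $\epsilon < 1/|S|$ and hence agrees with \emph{every} example in $S$ --- \emph{provided} the sample is realized by a target CQ whose size is under our control, since the number of examples the learner is entitled to demand may grow with $|q^*|$. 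This proviso is exactly where the absence of the \emph{polynomial-size fitting property} for CQs bites, and it dictates how the reduction must be shaped.

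Concretely, I would fix an $\text{NP}$-complete problem $P$ (for definiteness, $3$-SAT or graph $3$-colourability) and design a polynomial-time reduction sending an instance $x$ to a single database instance $I_x$ over $\mathbf S$ together with subsets $A^+, A^- \subseteq \adom(I_x)$, such that: (a) if $x$ is a yes-instance, the labeling marking $A^+$ positive and $A^-$ negative is realized by \emph{some path-CQ} $q^*_x$ whose size is polynomial in $|x|$; and (b) if $x$ is a no-instance, this labeling is realized by \emph{no} CQ whatsoever. Property (a) is what resurrects the consistency bridge despite the lack of polynomial-size fitting: on yes-instances there \emph{is} a small target, so polynomially many examples suffice. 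Property (b) delivers soundness. Since $q^*_x$ is a path-CQ it lies in $C$ by the hypothesis of the theorem, so we only ever use that $C$ contains all path-CQs; and since every example is built on the single instance $I_x$, the distribution we feed the learner --- uniform over $\{(I_x,a) \mid a \in A^+ \cup A^-\}$ --- is a single-instance distribution, which is how the bound survives that restriction.

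Given such a reduction, the $\text{RP}$-procedure for $P$ is: build $(I_x, A^+, A^-)$; put $\epsilon = 1/(2|A^+ \cup A^-|)$ and $\delta = \tfrac{1}{4}$; draw $m$ examples from the uniform single-instance distribution above, labeled as prescribed, where $m$ is the learner's sample-size bound evaluated at $|q^*_x|$, $1/\epsilon$, $1/\delta$ and $|I_x|$, all polynomial in $|x|$; run the learner to obtain a hypothesis CQ $q$; and accept iff $q$ labels every element of $A^+ \cup A^-$ correctly. On a yes-instance the input is a genuine PAC sample with a small target in $C$, so with probability at least $\tfrac{3}{4}$ the hypothesis $q$ is consistent, and we accept; on a no-instance no CQ is consistent, so whatever $q$ is it errs on some example and we reject with certainty. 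Hence $P \in \text{RP}$, i.e.\ $\text{NP} = \text{RP}$, the desired contradiction.

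Two steps carry the real weight. The first, as stressed above, is engineering the reduction so that yes-instances admit \emph{small} fitting path-CQs while no-instances admit none: this substitutes, for the polynomial-size fitting property that genuinely fails for CQs, a bespoke version valid on the image of the reduction, and the gadget construction is where the care is needed. The second is the concluding consistency check: deciding $a \in q(I_x)$ for an arbitrary CQ hypothesis $q$ is $\text{NP}$-hard in general, so the construction must keep this particular check tractable --- for instance by taking $I_x$ acyclic of a shape on which CQ evaluation runs in polynomial time, which meshes with the paper's acyclicity theme --- and reconciling this with genuine $\text{NP}$-hardness of the reduction is, I expect, the main obstacle.
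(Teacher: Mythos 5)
Your plan is exactly the paper's approach: reduce from 3-SAT to a single-instance fitting problem whose yes-instances admit a \emph{small fitting path-CQ} and whose no-instances admit \emph{no fitting CQ at all}, keep all examples on an instance where arbitrary CQ hypotheses can be evaluated in polynomial time, and then run the Pitt--Valiant consistency argument with $\epsilon$ below the minimum probability mass. You have also correctly located the two points where the standard argument breaks for CQs and how to patch each. What you have not supplied is the one load-bearing object: the gadget $(I_x, A^+, A^-)$ itself, together with a proof of your properties (a) and (b). Property (b) in particular quantifies over \emph{all} CQs, including wild non-acyclic ones, so it cannot be dismissed as routine.

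For the record, the paper realizes the gadget as follows. Given a 3CNF $\phi$ over $X_1,\dots,X_m$ with clauses $\phi_1,\dots,\phi_k$, each literal $l$ is assigned a depth $j_l\in\{1,\dots,2m\}$. For each variable $X_i$ there is a root $a_i$ with two outgoing $R$-paths of length $2m$ (a ``true'' branch and a ``false'' branch), decorated with $P$ at every literal-depth except the one coding the literal refuted by that branch; a further root $b$ has one $R$-path per clause, decorated with $P$ at depth $j_l$ exactly when $l$ does \emph{not} occur in that clause. The $a_i$ are positive, $b$ is negative. A satisfying assignment yields a fitting path-CQ of length $2m$ that asserts $P$ at the depths of the true literals; conversely, the set of depths at which any fitting query forces $P$ along its spine reads off a consistent set of literals hitting every clause. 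The step you flagged as the main obstacle --- reconciling NP-hardness with tractable hypothesis evaluation --- is resolved by making $I_\phi$ a forest of downward trees and proving a normalization lemma: every CQ can be quotiented in polynomial time to a tree-shaped CQ equivalent to it \emph{on tree-shaped instances}. This single lemma does double duty: it makes the concept class polynomial-time evaluable on the restricted example space (so the final consistency check is feasible), and it reduces property (b) to an analysis of tree-shaped fitting queries only. Until you exhibit such a gadget and the normalization argument, the proof is a correct skeleton rather than a proof.
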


Here,  RP denotes the class of problems 
solvable by a randomized algorithm with one-sided error that runs in polynomial time, and
by a \emph{path-CQ} we mean a unary CQ 
of the form
\[ \begin{array}{@{}l@{}l}
  q(x_1) \coloneq \exists x_2 \ldots x_n (&R(x_1,x_2)\land \cdots \land R(x_{n-1},x_n) \\[1mm]
  & \land P(x_{j_1})\land\cdots\land P(x_{j_m}))
  \end{array}
\]
where $R$ is a binary relation symbol and $P$ is a unary relation symbol. That is, a path-CQ is a very simple type of CQ that describes
an outgoing directed path decorated with a single unary relation symbol.

With a \emph{single-instance example distribution}, we mean
an example distribution $D$ such that for some database instance $I$, $D$ assigns non-zero probability
mass only to examples of the form $(I,\textbf{a})$. 
This captures 
the natural scenario of learning CQs from positive and 
negative examples that all pertain to a single given database instance.
Clearly, efficient PAC learnability w.r.t.~all example distributions implies efficient PAC learnability w.r.t.~single-instance distributions.

%
%
%
Note that efficient PAC learnability is 
not an anti-monotone property of query classes, and  Theorem~\ref{thm:main} says more than just that path-CQs are not efficiently PAC learnable.
In particular, Theorem~\ref{thm:main} implies that \emph{the class of all CQs} is not efficiently PAC learnable, and the same is true for all classes of acyclic CQs mentioned above since path-CQs belong to all of these classes. Theorem~\ref{thm:main} also implies  
non-efficient PAC learnability of concept expressions
in the description logics $\mathcal{EL}$ and $\mathcal{ELI}$ (even in the absence of a TBox), see e.g.~\cite{Funk2019:when} and references therein.

It is worth comparing the notion of a \emph{PAC learning algorithm} to that of a \emph{fitting algorithm}. 
Both types of algorithms take as input a set of labeled examples. A fitting 
algorithm decides the existence of a CQ that agrees with the labels of the input examples.
The fitting problem is coNExpTime-complete
for CQs~\cite{Willard10,CateD15} and, in fact, is known to be hard already for some more restricted classes of acyclic CQs~\cite{CateD15,funk2019:msc, Funk2019:when}.
A PAC algorithm, on the other hand, produces a CQ that, with high probability,
has a low expected error, but is not required to fit the input examples.
Despite these differences, it is well-known that for concept classes that are both polynomial-time evaluable and have the polynomial-size fitting property (defined in Section~2), NP-hardness of the fitting problem implies the non-existence of an efficient PAC learning algorithm~\cite{PittValiant88}, see Proposition~\ref{prop:pac-vs-fitting} below.
Unfortunately, the concept class of CQs has neither of these properties. A main 
difficulty of our proof of Theorem~\ref{thm:main} (which is nevertheless
based on a reduction from an NP-hard fitting problem) 
is to find a way around this.

We also prove that PAC learnability of CQs can be recovered by extending 
the PAC model with membership queries, known from Angluin's~\cite{Angluin88} 
model of exact learning. In a membership query, the learner chooses an example $(I,\mathbf{a})$ and asks an oracle to provide, in unit time,
the positive or negative labeling of $(I,\mathbf{a})$ according to the target query.
In Angluin's model of exact learning, CQs are known to not be efficiently learnable 
with membership queries alone, but they are efficiently learnable when also
equivalence queries are admitted (the learner may give a hypothesis query
to the oracle and ask whether it is equivalent to the target query, 
requesting a counterexample if this is not the case). The latter is implicit in~\cite{CateDK13:learning}, an explicit proof can be found in~\cite{tCD2022:conjunctive}, cf.~also~\cite{Arias2002:learning}.


As pointed out in~\cite{CateDK13:learning}, the fact that
CQs are efficiently exactly learnable with membership and equivalence
queries implies PAC learnability with membership queries and
an NP-oracle (cf.~\cite{Angluin88}), where the NP-oracle is used for evaluating hypotheses on examples.
It was left open whether CQs are efficiently PAC learnable with membership queries \emph{without} an NP-oracle. We give an affirmative answer to this question
and show that it also extends to UCQs, that is, to disjunctions of conjunctive
queries.
\begin{theorem}\label{thm:main2}
Fix any schema $\mathbf{S}$ and $k\geq 0$. The class of all $k$-ary CQs over $\mathbf{S}$ is efficiently PAC learnable with membership queries. The same is true for the class of all $k$-ary UCQs over $\mathbf{S}$.
\end{theorem}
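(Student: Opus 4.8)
The plan is to derive Theorem~\ref{thm:main2} from the known fact that CQs (and UCQs) are efficiently \emph{exactly} learnable with membership and equivalence queries~\cite{CateDK13:learning,tCD2022:conjunctive,Arias2002:learning}, via a variant of Angluin's reduction from exact learning to PAC learning~\cite{Angluin88}. In that reduction one simulates each equivalence query for a hypothesis $h$ by drawing a batch of labeled examples, checking each against $h$, and returning a misclassified one as a counterexample (or reporting ``equivalent'' if there is none). For CQs this is not polynomial, because deciding whether $h$ classifies a given example correctly means evaluating the CQ $h$ on a database, which is NP-hard --- this is exactly why the detour via exact learning was so far only known to yield PAC learnability with membership queries \emph{and an NP-oracle}. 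The key point that lets us remove the NP-oracle is that the exact learner may be assumed to have two properties: (i) on target $q^*$ it asks only polynomially many (in $|q^*|$) membership and equivalence queries, and all its hypotheses have size polynomial in $|q^*|$; and (ii) each hypothesis $h$ it produces is \emph{sound}, i.e.\ every example that $h$ labels positive is also labeled positive by $q^*$. Both hold, for instance, for the ``least general generalization'' algorithm that maintains the canonical query $q_K$ of an instance $K$ obtained as the direct product of the positive counterexamples seen so far, minimized using membership queries: since homomorphisms compose, the direct product of positive examples is a positive example, and minimization only deletes a fact once a membership query has certified that positivity is preserved, so $K$ stays a positive example for $q^*$ throughout --- and $q_K$ being the canonical query of a positive example is precisely what makes it sound. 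For UCQs one uses the analogous algorithm maintaining a bounded union of such canonical queries (once an efficient exact learnability result for UCQs is in place), which is sound for the same reason.

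Given (i) and (ii), a counterexample to a hypothesis $h$ can only ever be a \emph{positive} example that $h$ misclassifies as negative, and whether an example is positive is visible from its label. Our PAC algorithm therefore runs the exact learner, answering its membership queries with the oracle, and simulates each of its at most $R$ equivalence queries (where $R$ is the polynomial-in-$|q^*|$ bound from (i)) as follows: it draws a fresh batch of $m := \lceil \tfrac{1}{\epsilon}\ln\tfrac{R+1}{\delta}\rceil$ labeled examples and feeds \emph{every} positive example in the batch to the learner, one after another, as a counterexample --- \textbf{without ever evaluating $h$}. Feeding a positive example that $h$ already classifies correctly is harmless: for the product-based learner it amounts to taking the product with an instance that the current hypothesis already subsumes, which leaves the hypothesis unchanged up to logical equivalence; feeding one that $h$ gets wrong is a genuine counterexample and makes the learner progress. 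After $R+1$ such rounds we output the current hypothesis. The total number of labeled examples consumed is $(R+1)\cdot m$, which is polynomial in $|q^*|$, $1/\epsilon$ and $\log(1/\delta)$, and all further computation --- forming products of polynomial-size instances and minimizing with membership queries --- is polynomial.

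For correctness, observe first that throughout the run the hypothesis only generalizes while remaining sound, so its error, which equals its false-negative probability under $D$, never increases; hence it suffices to exhibit one round whose starting hypothesis is $\epsilon$-good. Second, genuine counterexamples can be received at most $R$ times in total, since afterwards the hypothesis is equivalent to $q^*$ and there are none; so at most $R$ of the $R+1$ rounds receive a genuine counterexample. Now fix a round whose starting hypothesis $h$ is $\epsilon$-bad; as $h$ is sound, a single draw from $D$ is a genuine counterexample for $h$ with probability $\operatorname{error}(h) > \epsilon$, and a false negative of $h$ remains a false negative of every (more general) hypothesis reached during the round, so the probability that this round receives no genuine counterexample is at most $(1-\epsilon)^m \le \delta/(R+1)$. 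A union bound over the $R+1$ rounds shows that, with probability at least $1-\delta$, every round that starts with an $\epsilon$-bad hypothesis receives a genuine counterexample; on that event the $\epsilon$-bad-starting rounds number at most $R$, so some round among the $R+1$ starts with an $\epsilon$-good hypothesis, and by the monotonicity observed above the final output is $\epsilon$-good.

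The \textbf{main obstacle}, and the reason the folklore reduction cannot be used off the shelf, is precisely the failure of CQs to be polynomially evaluable (and of the class to have the polynomial-size fitting property): we cannot afford to test intermediate hypotheses on examples. Circumventing this forces us to open up the exact learner and exploit that its hypotheses are sound --- so that the only possible counterexamples are positive examples, recognizable from their labels --- together with the fact that feeding the learner a ``false alarm'' is benign. A further point requiring care, which we defer to the analysis of the cited exact learning algorithm, is to confirm that the membership-query-based minimization keeps hypotheses of polynomial size without inflating the polynomial bound on the number of received counterexamples (equivalently, on how often the hypothesis strictly generalizes), and that the argument goes through for UCQs once an efficient, sound exact learner for UCQs with membership and equivalence queries is supplied.
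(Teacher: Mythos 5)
Your key insight is exactly the right one, and it is the same one the paper exploits: the hypothesis maintained by the product-and-minimize procedure is always the canonical query of a \emph{positive} example for $q^*$ (a ``critical'' positive example, shrunk via membership queries), hence it is \emph{sound}, so it automatically classifies every negative example correctly and never needs to be evaluated. Where you diverge is in how you turn this into a PAC algorithm. The paper does not simulate equivalence queries at all: it runs the product-and-minimize loop once over \emph{all} positive examples in the input sample and outputs the final canonical query $q$. That $q$ fits the positives is immediate (it has a homomorphism to each of them by construction), that it fits the negatives is immediate from soundness, and $|q|\le|q^*|$ is immediate from criticality. This is an Occam algorithm with membership queries, and the Blumer et al.\ bound gives the sample complexity. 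No progress measure, no counting of counterexamples, no union bound over rounds.

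The genuine gap in your version is the quantity $R$. Your entire correctness argument (``at most $R$ of the $R+1$ rounds receive a genuine counterexample, hence some round starts $\epsilon$-good'') and your sample bound $(R+1)\cdot m$ both require a \emph{polynomial} bound $R$ on the number of genuine counterexamples, and moreover one that survives the interleaving of false alarms. Neither is supplied by the deferral. First, for the naive least-general-generalization learner you describe, a polynomial equivalence-query bound is not immediate: each genuine counterexample yields a new critical example $K_{i}$ with $K_i\to K_{i-1}$ and $K_{i-1}\not\to K_i$, but a strictly descending chain of structures with at most $|q^*|$ facts is a priori only exponentially bounded; establishing the polynomial bound is precisely the delicate part of the exact-learnability analysis in \cite{tCD2022:conjunctive}, and that analysis is carried out against a true equivalence oracle. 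Second, your claim that a false alarm ``leaves the hypothesis unchanged up to logical equivalence'' is not quite right: if $K\to(I,\mathbf a)$ then $K\times I$ is indeed homomorphically equivalent to $K$, but the subsequent membership-query minimization of $K\times I$ can land on a critical subinstance that is strictly more general than $K$ (criticality is about subinstances, not homomorphic equivalence), so the run you generate need not correspond to any run of the exact learner against an honest oracle. Both difficulties evaporate if you drop the round structure and adopt the batch formulation: feed every positive example through the product-and-minimize loop, output the final canonical query, and invoke the Occam-to-PAC theorem. Your treatment of UCQs has the same issue one level up (it additionally presupposes an exact learner for UCQs with the required robustness), whereas the paper handles UCQs by directly maintaining a set of critical positive examples whose total size stays below $|q^*|$.
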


\subsection{Related work}
\label{sec:relwork}

Haussler~\cite{Haussler89} shows that the class of Boolean CQs over a schema that contains
an unbounded number of unary relation symbols is not efficiently PAC-learnable
(unless $\text{RP}=\text{NP}$). The essential part of the proof is to show that  
the fitting problem for the same concept class is NP-complete. Over a schema that consists
of unary relation symbols only, every CQ is trivially Berge-acyclic. Therefore, this 
  implies that efficient PAC learnability fails for acyclic Boolean CQs, for
any of the aforementioned notions of acyclicity. 
The fact that Haussler's result is stated for Boolean CQs and Theorem~\ref{thm:main} is stated for unary CQs is an inessential difference (cf.~\cite{DBLP:conf/ecml/Kietz93}).
The fact that the proof in~\cite{Haussler89} uses an unbounded number of unary relation symbols, however, is an important difference. Indeed, if one was to 
consider Boolean queries over a fixed finite schema that consists of unary relation symbols only, then the resulting concept class would be finite and trivially PAC learnable. 

Kietz~\cite{DBLP:conf/ecml/Kietz93} proves that the class of unary CQs over a schema that contains a single binary relation symbol and an 
unbounded number of unary relation symbols is not PAC-learnable (unless $\text{RP}=\text{NP}$). Again the essential part of the proof is to show that  
the fitting problem is NP-complete. Kietz's result already applies to path-CQs of length~1 with multiple unary relation symbols.
This is only possible because of the infinite schema, as, otherwise,
the concept class is again finite and trivially PAC learnable.

\revnote{
Cohen \cite{DBLP:conf/colt/Cohen96} proves that the class of
unary CQs over a schema that contains two binary relation 
symbols is not PAC-predictable unless certain assumptions from
the field of cryptography fail.
In \emph{PAC prediction}, 
the output of the algorithm is not required to be a concept from the concept class, but instead must be any polynomial-time evaluable concept such
as a polynomial-time algorithm. 
PAC learnability implies PAC predictability 
for concept classes that are polynomial-time evaluable
(cf.~Remark~\ref{rem:prediction-vs-learning}).
%
Cohen's result already applies to
path-CQs (defined slightly differently than above, using two binary relation symbols and no
unary relation symbol -- this difference is inessential).
As a consequence, Cohen's result yields the restriction of Theorem~\ref{thm:main} to polynomial-time evaluable classes $C$
(such as the class of all acyclic CQs, under any of the mentioned notions of
acyclicity), under cryptographic assumptions.
Moreover, in contrast to PAC learnability, PAC predictability is an anti-monotone property of concept classes. Thus, Cohen's result also yields
Theorem~\ref{thm:main} for efficient PAC predictability in place of     
efficient PAC learnability, again under cryptographic assumptions.

In an earlier paper \cite{Cohen93}, Cohen had proved a related but weaker result that 
requires relation symbols of arity three. The work of Hirata~\cite{Hirata2005:prediction}, in a similar vain, 
shows that there is even a fixed database on which efficient
PAC prediction (and thus also learning) of acyclic CQs is impossible -- a stronger condition than single-instance example distributions. The result, however, requires ternary relation symbols and CQs of unbounded arity.
We also remark that it follows from general results of Schapire, see Section~6.3 of~\cite{DBLP:journals/ml/Schapire90}, that any class of CQs that is NP-hard to evaluate is not efficiently PAC-predictable unless $\text{NP} \subseteq \text{P/poly}.$
}

We consider, in this note,  classes of CQs defined through
acyclicity conditions. In the literature on inductive logic programming (ILP) various positive and negative
PAC learnability results have been obtained for classes of CQs
defined by different means (e.g., limitations on the use of existential variables, determinacy conditions pertaining to functional relations, and restricted variable depth). These are orthogonal to acyclicity. An overview can be found in \cite[Chapter 18]{ChengWolf:1997}.

In~\cite{CateDK13:learning}, the authors study learnability of 
\emph{GAV schema mappings}, which are closely related to
\emph{Unions of Conjunctive Queries (UCQs)}.
Specifically, it was proved in \cite{CateDK13:learning} that GAV schema
mappings are not efficiently PAC learnable, assuming  $\text{RP} \neq \text{NP}$, on
source schemas that contain at least one relation symbol of arity at
least two, using a reduction of the non-PAC-learnability of
propositional formulas in positive DNF. This result immediately
implies that, for any schema $\mathbf{S}$ containing a relation
symbol of arity at least two, and for each $k\geq 0$,
the class of $k$-ary UCQs over $\mathbf{S}$ is not efficiently PAC learnable, assuming  $\text{RP} \neq \text{NP}$. Additionally, in~\cite{CateDK13:learning}, the authors completely map out the (non-)learnability of restricted classes of UCQs definable by conditions on their Gaifman graph.

There is also another line of work on PAC learnability of conjunctive queries \cite{Dalmau1999,DalmauJeavons2003,ChenValeriote2019} that is somewhat different in nature: one fixes a schema $\mathbf{S}$ and an $\mathbf{S}$-instance $I$ and defines a concept class where the concepts are now all relations over the active domain of $I$ definable by a $k$-ary CQ (as evaluated in $I$).
PAC learning for various classes of Boolean formulas, such as
3-CNF, can be seen as a special case of this framework, for 
a specific choice of schema $\mathbf{S}$ and (two-element) instance $I$, where $k$ then
corresponds to the number of Boolean variables.
Since, for a fixed choice of $k$,
this yields a finite concept class,
in this setting, one is interested
in the complexity of PAC learning as a function of $k$.
The mentioned papers  establish effective dichotomies, showing that, depending on the 
choice of $\mathbf{S}$ and $I$,  
this concept class is either efficiently PAC learnable in $k$ or is not even efficiently PAC predictable with membership queries in $k$ (under suitable cryptographic assumptions).  See also Remark~\ref{rem:prediction-vs-learning} below.

\section{Preliminaries}
\label{sec:preliminaries}

\subsection{Conjunctive Queries}

A \emph{schema} $\mathbf{S}$ is a finite set of relation symbols with associated arity. 
An \emph{instance} $I$ over schema $\mathbf{S}$ is 
a finite set of facts over~$\mathbf{S}$, where a \emph{fact}
is an expression of the form $R(a_1, \ldots, a_n)$
where $R\in\mathbf{S}$ is an $n$-ary relation symbol and $a_1,\dots,a_n$ are \emph{values}.
The \emph{active domain} of an instance $I$,
denoted by $\adom(I)$ is the 
(finite) set of values that occur in the facts of $I$.

A $k$-ary \emph{conjunctive query (CQ)} over a schema $\mathbf{S}$, for $k\geq 0$, is an expression of 
the form $$q(\textbf{x}) \coloneq \exists\textbf{y}(\alpha_1\land\cdots\land \alpha_n)$$
where $\textbf{x},\textbf{y}$ are tuples of variables, $\textbf{x}$ has length $k$, and each conjunct $\alpha_i$ is an atomic
formula that uses a relation symbol from $\mathbf{S}$ and only variables from $\textbf{x}$ and $\textbf{y}$ , such that
each variable from $\textbf{x}$ occurs in some conjunct.
We denote by $q(I)$ the set of all $k$-tuples
$\textbf{a}$ such that 
$I\models q(\textbf{a})$. 


We will not define in depth the various notions of 
acyclicity that have been mentioned in the introduction, but
we reiterate here that they form a hierarchy with Berge-acyclicity
being most restrictive, and that all mentioned classes of
acyclic queries are polynomial-time evaluable, meaning that 
given a CQ $q(\textbf{x})$ from the class, an instance $I$ and a tuple $\textbf{a}$ of elements of the active domain of~$I$, we can 
decide in polynomial time whether $\textbf{a}\in q(I)$.

The definition of \emph{path-CQs} was  given in Section~\ref{sec:intro}.

\begin{example}
An example of a path-CQ is the query 
\[ q(x) \coloneq \exists yzu (R(x,y)\land R(y,z)\land R(z,u)\land P(y)\land P(u)).\]
\end{example}

Every path-CQ is Berge-acyclic and hence polynomial-time evaluable, see
e.g.\ the classic paper where this is proved for 
$\alpha$-acyclic queries~\cite{Fagin83:acyclicity}.



\subsection{Computational Learning Theory}

A \emph{concept class} is a triple
$C=(\Phi, \mathit{Ex}, \models)$, where $\Phi$ is a 
set of concepts, $\mathit{Ex}$ is a set of examples,
and ${\models}\subseteq \mathit{Ex}\times \Phi$ represents whether
an example is a positive or a negative example for a given
concept. We also denote by $\mathit{lab}_\phi(e)$ the \emph{label}
of $e$ according to $\phi$, that is, $\mathit{lab}_\phi(e)=+$ if
$e\models\phi$ and $\mathit{lab}_\phi(e)=-$ otherwise.
Two concepts $\phi,\phi'\in \Phi$
are said to be \emph{equivalent} if 
$\mathit{lab}_\phi(e)=\mathit{lab}_{\phi'}(e)$ for all $e\in \mathit{Ex}$.%
\footnote{
This deviates slightly from the standard convention, which defines a concept class to be a 
pair $(\mathit{Ex}, C)$ where $C\subseteq \wp(\mathit{Ex})$
(and, for $c\in C$, $|c|$ to be the
size of the smallest representation of $c$). The difference is non-essential.
We prefer this presentation
as it makes it easier to spell out unambiguously the 
algorithmic problems that we consider (e.g., Definition~\ref{def:poly-eval})
}

A \emph{labeled example} is a pair $(e,s)$ with $e\in \mathit{Ex}$ and
$s\in \{+,-\}$. A concept $\phi\in\Phi$ \emph{fits} a set of labeled examples $E$ if $\mathit{lab}_\phi(e)=s$ for all $(e,s)\in E$.

We  only consider countable concept classes.
Concepts and examples are assumed to have an effective
representation and a corresponding notion of size,
which is denoted by $|\phi|$ and $|e|$, respectively.
We  also denote the set of all concepts (examples) of size at
most $n$ by $\Phi_{(n)}$ (respectively, $Ex_{(n)}$).
For a finite set of (possibly labeled) examples $E$, 
$||E|| = \sum_{e\in E}|e|$.

The following two properties of concept classes will be important
for us later on:

\begin{definition}[Polynomial-time evaluability]
\label{def:poly-eval}
A concept class is \emph{polynomial-time evaluable}
if there exists a polynomial-time algorithm that, given 
$\phi\in \Phi$ and $e\in \mathit{Ex}$, outputs a Boolean indicating whether
$e\models\phi$.
\end{definition}

\begin{definition}[Polynomial-size fitting property]
A concept class has the \emph{polynomial-size fitting property}
if for every finite set of labeled examples $E$, the
existence of a concept that fits $E$ implies that there exists
a fitting concept whose size is bounded by a polynomial in 
$||E||$.
\end{definition}

We now define the two algorithmic problems mentioned in the introduction, namely fitting and PAC learning.

\begin{definition}[Fitting problem]
The \emph{fitting problem} (also known as \emph{consistency problem} or
\emph{separability problem}) for a concept class $C$ is the problem to
decide, given a finite set of labeled examples $E$, whether there exists a
concept in~$C$ that fits~$E$. 
\end{definition}

In order to define PAC algorithms, we first need to introduce some terminology and
notation. An \emph{example distribution} for a concept class $C=(\Phi,\mathit{Ex},\models)$
is a probability distribution $D$ over $\mathit{Ex}$.
Given concepts $\phi, \phi^*\in \Phi$ and an example distribution $D$, 
\[\error_{D,\phi^*}(\phi)= \mathop{\textrm{Pr}}_{e \in D}(\mathit{lab}_\phi(e) \neq \mathit{lab}_{\phi^*}(e))\]
is the expected
error of $\phi$ relative to $\phi^*$ and $D$.

\begin{definition}[Efficient PAC learnability]
\label{def:pac}
An \emph{efficient PAC algorithm} for a concept class $C$ is a pair $(A,f)$ where
\begin{itemize}
    \item $A$ is a
randomized polynomial-time algorithm
that takes as input a set of labeled examples and outputs a concept from $C$, and
\item $f(\cdot,\cdot,\cdot,\cdot)$ is a polynomial function, such that, 
    for all $\delta,\epsilon\in (0,1)$,  all $n,m\in\mathbb{N}$, all
    example distributions $D$ over $\mathit{Ex}_{(m)}$, and  all $\phi^*\in \Phi_{(n)}$, if the input consists of at least $f(1/\delta,1/\epsilon,n,m)$
    examples drawn from $D$ and labeled according to $\phi^*$, then with probability at least $1-\delta$, 
    $A$ outputs a concept $\phi$ with $\error_{\phi^*,D}(\phi) \leq \epsilon$. 
\end{itemize}
  If such an algorithm exists, we say that $C$ is 
  \emph{efficiently PAC learnable}.
  If the function $f$ depends only on $\delta$ and
  $\epsilon$ and not on $n, m$, then we say that $(A,f)$ is
  a \emph{strongly efficient} PAC algorithm, and that
  the concept class $C$ is 
  \emph{strongly efficiently PAC learnable}.
\end{definition}

This definition of efficient PAC algorithms is modeled after the one in the textbook~\cite{AnthonyBiggs92}, in line with the literature on inductive logic programming (cf., e.g.,~\cite{ChengWolf:1997}). 
Our results also apply
to the alternative oracle-based definition.\footnote{\label{footnote:oracle}
Following the oracle-based
presentation in, e.g.,~\cite{KearnsVazirani}, one can define an 
efficient PAC learning algorithm for a concept
class $C$ to be a randomized polynomial-time algorithm that takes as input $\delta,\epsilon \in (0,1)$ and a bound $n \in \mathbb{N}$ on the size of the target concept~$\phi^*$, and that has access to an oracle $\mathit{EX}_{\phi^*,D}$ which, when called, returns
(in unit time) a random example drawn from~$D$
and labeled according to $\phi^*$. For every
choice of  $\delta, \epsilon$, $\phi^*\in \Phi$, $n\geq |\phi^*|$, 
and for every example distribution~$D$, the algorithm must terminate in time polynomial in 
$1/\delta, 1/\varepsilon$, $n$, and 
the size of the largest example returned by the 
oracle.
Furthermore, it must 
return a concept that with probability $1-\delta$ satisfies $\error_{\phi^*,D}(\phi)<\epsilon$.

Note that, under this 
definition, 
not only the running time of the algorithm but also the number of 
examples drawn from the distribution may depend 
on the size of examples:
if the learning algorithm
encounters a large example $e$, it may follow up by requesting a number of additional
examples that is polynomial in the size of $e$.

Efficient PAC learnability in the above sense implies efficient PAC learnability in the sense of
Definition~\ref{def:pac}: \revnote{one can 
turn an oracle-based learning algorithm into a learning
algorithm according to Definition~\ref{def:pac} 
by drawing examples uniformly at random from the input batch to answer EX oracle calls. (To guarantee polynomial-time termination, even on inputs where a 
fitting concept does not exist, we can maintain a counter and terminating after $p(n)$ steps, where
$p$ is the polynomial that bounds the running time 
of the oracle-based learner on consistent inputs).}
Our negative learnability results thus apply also to the
oracle-based definition. A classic paper that shows equivalence of different
PAC learning models is \cite{DBLP:journals/iandc/HausslerKLW91}.}
%
%
We prefer the  above definition
as it exhibits more clearly the relationship to fitting algorithms.

The following proposition relates the two algorithmic problems (fitting and PAC learning) to 
each other. 

\begin{proposition}[Pitt and Valiant~\cite{PittValiant88}]
\label{prop:pac-vs-fitting}
Let $C$ be a polynomial-time evaluable concept class with the poly\-nomial-size fitting property.
If $C$ is efficiently PAC learnable, then the fitting problem 
for $C$ is in RP.
\end{proposition}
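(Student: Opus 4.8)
The plan is to prove the contrapositive: assuming $C$ is efficiently PAC learnable via an algorithm $(A,f)$, I would exhibit a randomized polynomial-time procedure that decides the fitting problem with one-sided error. Given a finite set of labeled examples $E$, the idea is to run the PAC learner on examples drawn from the \emph{uniform distribution} $D_E$ over $E$, using some target concept consistent with $E$ as the (conceptually hidden) labeling source. The point is that if a fitting concept exists at all, then by the polynomial-size fitting property there is one, say $\phi^*$, whose size is bounded by a polynomial $p(\|E\|)$; we may then legitimately pass $n = p(\|E\|)$ as the size bound and $m = \max_{e \in E}|e|$ as the example-size bound to the learner.

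The key steps, in order, are as follows. First, set $\epsilon$ to be smaller than $1/|E|$ (e.g.\ $\epsilon = 1/(|E|+1)$) and fix $\delta$, say $\delta = 1/2$; draw $f(1/\delta, 1/\epsilon, n, m)$ examples independently and uniformly from $E$, labeling each according to its given label in $E$ — note this sampling is faithful to $D_E$ and $\phi^*$ without our needing to know $\phi^*$ explicitly. Second, run $A$ on this labeled sample to obtain a hypothesis $\phi \in C$. Third, using polynomial-time evaluability, check whether $\phi$ fits every example in $E$; if yes, output ``fitting concept exists,'' otherwise output ``no.'' Correctness: if no fitting concept exists, then in particular $\phi$ cannot fit $E$ (since $\phi \in C$), so we always (correctly) output ``no'' — this is the zero-error side. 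If a fitting concept exists, then with probability at least $1-\delta = 1/2$ the learner returns $\phi$ with $\error_{D_E, \phi^*}(\phi) \le \epsilon < 1/|E|$; but under the uniform distribution any single misclassified example contributes error at least $1/|E|$, so $\error_{D_E,\phi^*}(\phi) < 1/|E|$ forces $\phi$ to agree with $\phi^*$ on all of $E$, i.e.\ $\phi$ fits $E$, and we output ``yes.'' Hence the procedure is in RP. Finally, I would check that the running time is polynomial: the number of examples is $f(|E|+1, 2, p(\|E\|), m)$, which is polynomial in $\|E\|$ since $f$ and $p$ are polynomials and $|E|, m \le \|E\|$; $A$ runs in polynomial time in its input size; and the final verification over $E$ is polynomial by evaluability.

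The main obstacle — and the reason the polynomial-size fitting property is needed — is justifying the choice of the size parameter $n$ fed to the learner. The PAC guarantee only holds when $n \ge |\phi^*|$ for the target $\phi^*$, and the number of samples $f$ (hence the whole running time) depends polynomially on $n$; without a polynomial bound on the size of \emph{some} fitting concept, we could not keep $n$ — and therefore the sample size and runtime — polynomial in $\|E\|$. A secondary subtlety is that the definition of efficient PAC learnability quantifies over target concepts $\phi^* \in \Phi_{(n)}$ and distributions over $\mathit{Ex}_{(m)}$; one must observe that $D_E$ is supported on $\mathit{Ex}_{(m)}$ with $m = \max_{e\in E}|e|$ and that there genuinely is a target in $\Phi_{(n)}$ inducing the labels of $E$, so the PAC guarantee applies. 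Everything else is bookkeeping.
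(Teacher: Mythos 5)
Your proposal is correct and follows essentially the same route as the paper's proof: uniform distribution on $E$, size bound $n=p(\|E\|)$ from the polynomial-size fitting property, $\epsilon<1/|E|$ to force zero error on the support, and a final fit-check via polynomial-time evaluability for one-sided error. (Minor quibble: you call this "the contrapositive," but you are in fact proving the implication directly; also, with $\delta=1/2$ the success probability is exactly $1/2$, which still meets the usual RP threshold.)
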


This is a well-known fact (cf.~also~\cite[Thm~6.2.1]{AnthonyBiggs92}), although not in this precise formulation, as, usually,
polynomial evaluability and the polynomial-size fitting property are tacitly 
assumed (\revnote{which has sometimes led to mistakes,
e.g., in the derivation of Corollary~15 in \cite{DBLP:conf/ecml/Kietz93}}).
To be self-contained,
we outline the proof of Proposition~\ref{prop:pac-vs-fitting} here.

\smallskip

\begin{proof} (of Proposition~\ref{prop:pac-vs-fitting})
Assume that there is an efficient PAC algorithm $(A,f)$ for $C$.
We use it to solve the fitting problem for $C$ in randomized polynomial time.
Assume that a set $E$ of $k$ labeled examples is given as the
input. 
Let $n=p(||E||)$, where $p$ is the polynomial
witnessing the fact that $C$ has the polynomial-size fitting property.
Let $D$ be the uniform distribution on $E$ (where each example in $E$ gets probability mass $1/k$), and let $m$ be the maximum size of an 
example in $E$.
Pick $\delta < .5$ and $\epsilon < 1/k$.
We generate a new (polynomial-sized) collection of labeled examples $E'$ by
drawing $f(1/\delta,1/\epsilon,n,m)$ samples from distribution $D$, and run algorithm $A$ on it.
Finally, we check that the output of $A$ is a fitting concept for $E$. If so, we answer Yes. Otherwise,
we answer No.

Clearly, if there is no fitting concept, the output will be No. If, on the other hand, there is a fitting concept, then there is one of size at most $n$, and
   hence, with probability $1-\delta$, the algorithm will 
output a concept
   with error less than $\epsilon$. This in fact implies that the error is $0$
   (because if the query misclassifies an example to which $D$ assigns non-zero mass, then it will have error at least $1/k$).
   Hence, with probability $1-\delta > 0.5$ the algorithm outputs Yes.
\end{proof}

A variation on the same argument shows:

\begin{proposition}\label{prop:strong-learnability}
If a concept class is strongly efficiently PAC learnable,
then it has the polynomial-size fitting property.
\end{proposition}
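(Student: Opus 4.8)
The plan is to run the same kind of argument as in the proof of Proposition~\ref{prop:pac-vs-fitting}, but to use strong efficiency in order to reverse the logic: rather than assuming the polynomial-size fitting property to obtain an RP fitting algorithm, we use a strongly efficient PAC algorithm to \emph{produce}, for every fittable example set $E$, a fitting concept that is automatically of polynomial size, simply because it is the output of a polynomial-time procedure run on an input whose size is polynomially bounded in $\|E\|$.

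In detail, let $(A,f)$ be a strongly efficient PAC algorithm for the concept class $C=(\Phi,\mathit{Ex},\models)$, so that $f(1/\delta,1/\epsilon,n,m)$ does not depend on $n$ or $m$. Fix a finite set of labeled examples $E$ with $k=|E|$ (we may assume $k\geq 1$, the case $k=0$ being trivial) and with $m=\max_{e\in E}|e|$, and suppose some $\phi^*\in\Phi$ fits $E$. Let $D$ be the uniform distribution on the underlying (unlabeled) examples of $E$; this is an example distribution over $\mathit{Ex}_{(m)}$, and each example receives mass $1/k$. Put $\delta=1/2$ and $\epsilon=1/(k+1)$, and let $N:=f(1/\delta,1/\epsilon,|\phi^*|,m)$; by strong efficiency $N$ is a function of $\delta$ and $\epsilon$ alone, and since $f$ is a polynomial function, $N$ is bounded by a polynomial in $k$, hence in $\|E\|$ (as $k\leq\|E\|$).

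Next, consider drawing $N$ examples from $D$ and labeling them according to $\phi^*$. By the definition of a PAC algorithm, with probability at least $1-\delta=1/2$ the algorithm $A$ returns a concept $\phi\in\Phi$ with $\error_{\phi^*,D}(\phi)\leq\epsilon<1/k$. Since $\phi^*$ labels all of $E$ correctly and $D$ puts mass $1/k$ on each example of $E$, an error strictly below $1/k$ forces $\error_{\phi^*,D}(\phi)=0$, i.e.\ $\phi$ fits $E$. As this event has positive probability, there exist a concrete sample $E'$ of $N$ labeled examples drawn from $E$ and a concrete sequence of coin tosses on which $A(E')$ outputs a concept $\phi$ that fits $E$.

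Finally we bound $|\phi|$. The input $E'$ has total size $\|E'\|\leq N\cdot m$, which is polynomial in $\|E\|$ since $N$ is polynomial in $\|E\|$ and $m\leq\|E\|$; moreover the polynomial involved does not depend on $E$. Because $A$ runs in polynomial time, the size of its output $\phi$ is polynomial in $\|E'\|$, hence polynomial in $\|E\|$, with a polynomial independent of $E$. Thus $\phi$ is a concept fitting $E$ of size polynomial in $\|E\|$, which is precisely the polynomial-size fitting property. I do not expect a genuine obstacle here; the one point needing care is the observation that strong efficiency is exactly what removes the dependence of the sample size on $|\phi^*|$, thereby keeping the input to $A$ — and therefore its output — polynomially bounded in $\|E\|$ alone.
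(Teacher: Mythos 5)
Your proof is correct and follows essentially the same route as the paper's: run the reduction from Proposition~\ref{prop:pac-vs-fitting} with the uniform distribution on $E$, note that strong efficiency makes the sample size independent of the (unknown) target size so the input to $A$ is polynomially bounded in $\|E\|$, and conclude that the fitting concept output with positive probability must itself have polynomial size because it is produced by a polynomial-time run. The one refinement you add — spelling out why the error bound $\epsilon<1/k$ forces error $0$ and hence a genuine fit — is exactly the intended argument.
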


\begin{proof} The proof uses the same construction as before, 
except that the sample size now does not depend on $n$.
Furthermore, we omit the verification step where we 
confirm that the produced concept fits the input examples. Instead, we just output the result of the learning algorithm. 
In this way, we obtain a randomized polynomial-time algorithm that has a non-zero probability of outputting a fitting concept for given input labeled examples, 
whenever a fitting concept exists.
The polynomial-size fitting property immediately 
follows from this (the run that outputs a fitting concept does
so in polynomial time).
\end{proof}

We also make use of the following trivial fact:

\begin{proposition}\label{prop:subspace}
    If a concept class $(\Phi,\mathit{Ex},\models)$ is efficiently PAC learnable, then, for every $\mathit{Ex}'\subseteq \mathit{Ex}$, the 
    concept class $(\Phi,\mathit{Ex}',\models)$ is also efficiently PAC learnable.
\end{proposition}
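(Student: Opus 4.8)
The plan is to observe that the very same PAC algorithm witnessing efficient PAC learnability of $(\Phi,\mathit{Ex},\models)$ already witnesses it for the restricted class, so that no new construction is required. Concretely, I would let $(A,f)$ be an efficient PAC algorithm for $(\Phi,\mathit{Ex},\models)$ in the sense of Definition~\ref{def:pac} and simply claim that $(A,f)$ is also an efficient PAC algorithm for $(\Phi,\mathit{Ex}',\models)$. Since the concept set $\Phi$ is unchanged, $A$ still outputs a concept of the restricted class, and $f$ remains a polynomial in the same four arguments, so $(A,f)$ is a syntactically legitimate candidate.

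The only thing to verify is that the correctness guarantee transfers. Fix $\delta,\epsilon\in(0,1)$, $n,m\in\mathbb{N}$, an example distribution $D$ over $\mathit{Ex}'_{(m)}$, and a target $\phi^*\in\Phi_{(n)}$. Because $\mathit{Ex}'\subseteq\mathit{Ex}$ — and because $\mathit{Ex}'$ inherits its representation and size function from $\mathit{Ex}$, so that $\mathit{Ex}'_{(m)}\subseteq\mathit{Ex}_{(m)}$ — the distribution $D$ is also a legitimate example distribution over $\mathit{Ex}_{(m)}$ for the unrestricted class. Moreover $\mathit{lab}_\phi$, and hence $\error_{\phi^*,D}(\phi)$, is computed purely from $D$ together with the restriction of $\models$ to the support of $D$, which is the same relation in both concept classes; so the assertion $\error_{\phi^*,D}(\phi)\le\epsilon$ has the same meaning in both. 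Applying the PAC guarantee of $(A,f)$ for $(\Phi,\mathit{Ex},\models)$ to this particular $D$ and $\phi^*$ then yields exactly the required guarantee for $(\Phi,\mathit{Ex}',\models)$.

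I do not expect any genuine obstacle: the statement is an immediate consequence of the observation that restricting the example space only shrinks the family of distributions against which the learner must succeed, while leaving the concept set and the error functional untouched; the bookkeeping with the size bound $m$ is the single place where one must be mildly careful, and it is automatic. The same argument applies verbatim to the oracle-based definition of footnote~\ref{footnote:oracle}, since an $\mathit{EX}_{\phi^*,D}$ oracle for the subspace class is just such an oracle for the original class whose $D$ happens to be supported on $\mathit{Ex}'$.
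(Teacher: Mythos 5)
Your proposal is correct and matches the paper's own (one-line) justification exactly: the same algorithm $(A,f)$ works because every example distribution over $\mathit{Ex}'_{(m)}$ is in particular an example distribution over $\mathit{Ex}_{(m)}$ that assigns no mass to $\mathit{Ex}\setminus\mathit{Ex}'$. Your additional care about the size function on $\mathit{Ex}'$ and the invariance of the error functional is a welcome but routine elaboration of the same argument.
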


Indeed, this follows from the fact that every example distribution over $\mathit{Ex}'_{(n)}$ 
is in particular also an example distribution over $\mathit{Ex}_{(n)}$
(that assigns no probability mass to any example in 
$\mathit{Ex}\setminus \mathit{Ex}'$).

Finally, we   use  a well known connection between  PAC algorithms and \emph{Occam algorithms}.

\begin{definition}[Occam algorithm]\label{def:occam}
    An \emph{Occam algorithm} for a concept class $C=(\Phi,\mathit{Ex},\models)$, 
with parameters $\alpha<1$ and $k\geq 1$, is an algorithm that takes as input a set of labeled examples $E$  and outputs a concept $\phi\in \Phi$ with $|\phi|\leq {|E|}^{\alpha} |\phi^*|^k$ that fits $E$ provided that any 
concept from $\Phi$ does. Furthermore, the running time is required to be bounded by a polynomial in $|\phi^*|$ and $||E||$. 
\end{definition}
 
Blumer \textit{et al.}~\cite{Blumer89:learnability} proved that every Occam algorithm $A$ yields an efficient PAC algorithm, namely $A'=(A,f)$, where 
the sample-size polynomial $f$ is chosen such that
 $$f(1/\delta,1/\epsilon,n,m)=\left(\frac{n^k\ln 2+\ln(2/\delta)}{\epsilon} \right)^{1/(1-\alpha)}.$$
Note that  $f$ does not depend 
on its fourth component $m$ (i.e., the example size bound).
Moreover, every Occam algorithm gives rise to an efficient PAC algorithm, not only in the sense of Definition~\ref{def:pac} as explained above, but, by the same arguments, also when considering the oracle-based presentation of PAC algorithms (cf.~Footnote~\ref{footnote:oracle}).

\begin{theorem}[\cite{Blumer89:learnability}] \label{thm:occam}
Every concept class for which there is an Occam algorithm
is efficiently PAC learnable.
\end{theorem}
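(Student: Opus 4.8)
The plan is to verify that the pair $A'=(A,f)$, where $A$ is the given Occam algorithm (with parameters $\alpha<1$ and $k\geq 1$) and $f$ is the stated sample-size polynomial, is an efficient PAC algorithm in the sense of Definition~\ref{def:pac}. The running-time requirement is essentially inherited: $A$ runs in time polynomial in $|\phi^*|$ and $||E||$, and since in the PAC protocol the number of drawn examples is at least $f(1/\delta,1/\epsilon,n,m)$, which grows polynomially with $n\geq|\phi^*|$, the target size $|\phi^*|$ is itself polynomially bounded in the input size $||E||$; hence $A$ runs in time polynomial in its input. The real content is the correctness guarantee, so I fix $\delta,\epsilon\in(0,1)$, $n,m\in\mathbb{N}$, an example distribution $D$ over $\mathit{Ex}_{(m)}$, and a target $\phi^*\in\Phi_{(n)}$, and aim to show that on any sample $E$ of size $s\geq f(1/\delta,1/\epsilon,n,m)$ drawn i.i.d.\ from $D$ and labeled by $\phi^*$, the concept returned by $A$ has error at most $\epsilon$ with probability at least $1-\delta$.

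The core of the argument is a union bound that exploits the Occam compression. Since $\phi^*$ itself fits $E$, the Occam guarantee applies, and $A$ outputs a concept $\phi$ that fits $E$ with $|\phi|\leq s^\alpha|\phi^*|^k\leq s^\alpha n^k$; in particular $\phi$ belongs to the finite class $\Phi_{(s^\alpha n^k)}$. Because $\phi$ depends on the random sample, I cannot estimate its error directly; instead I bound the probability that \emph{some} concept in $\Phi_{(s^\alpha n^k)}$ with error exceeding $\epsilon$ happens to fit $E$, which dominates the event that $A$ fails. For a fixed concept $\psi$ with $\error_{D,\phi^*}(\psi)>\epsilon$, the probability that $\psi$ agrees with $\phi^*$ on a single draw is $1-\error_{D,\phi^*}(\psi)<1-\epsilon$, so by independence the probability that $\psi$ fits all $s$ examples is at most $(1-\epsilon)^s\leq e^{-\epsilon s}$. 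Using the convention that concepts are binary strings, so that $|\Phi_{(\ell)}|<2^{\ell+1}$, a union bound over $\Phi_{(s^\alpha n^k)}$ gives that $A$ outputs a concept of error greater than $\epsilon$ with probability less than $2^{s^\alpha n^k+1}\,e^{-\epsilon s}$.

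It then remains to check that this quantity is at most $\delta$ whenever $s\geq f$. Taking logarithms, the inequality $2^{s^\alpha n^k+1}e^{-\epsilon s}\leq\delta$ is equivalent to $\epsilon s-s^\alpha n^k\ln 2\geq\ln(2/\delta)$. Writing $g(s)=\epsilon s-s^\alpha n^k\ln 2$ and substituting the stated value $f=\left(\frac{n^k\ln 2+\ln(2/\delta)}{\epsilon}\right)^{1/(1-\alpha)}$, one finds $g(f)=f^\alpha\ln(2/\delta)\geq\ln(2/\delta)$ because $f\geq 1$. To pass from $s=f$ to all $s\geq f$, I would verify that $g'(f)=\epsilon\bigl(1-\alpha n^k\ln 2/(n^k\ln 2+\ln(2/\delta))\bigr)>0$ and that $g$ is convex on $[f,\infty)$, so that $g$ is increasing there and the inequality persists. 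I expect this monotonicity bookkeeping to be the only fiddly point. The conceptual crux—and the reason $\alpha<1$ is indispensable—is the tension laid bare by the union bound: the number of candidate outputs grows like $2^{s^\alpha n^k}$ while each bad candidate survives with probability only $e^{-\epsilon s}$, and a polynomial sample suffices precisely because $\alpha<1$ makes the linear-in-$s$ decay overwhelm the sub-linear-in-$s$ growth of the exponent.
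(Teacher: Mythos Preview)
The paper does not actually prove Theorem~\ref{thm:occam}; it merely cites Blumer \textit{et al.}\ and records the sample-size polynomial~$f$, so there is no in-paper argument to compare against. Your write-up is exactly the classical union-bound proof underlying that citation: the Occam compression confines the output to $\Phi_{(s^\alpha n^k)}$, each $\epsilon$-bad hypothesis survives an i.i.d.\ sample of size $s$ with probability at most $e^{-\epsilon s}$, and the bound $2^{s^\alpha n^k+1}e^{-\epsilon s}\leq\delta$ reduces to $g(s)=\epsilon s - s^\alpha n^k\ln 2\geq\ln(2/\delta)$, which your computation $g(f)=f^\alpha\ln(2/\delta)$ and the convexity/monotonicity check establish for all $s\geq f$. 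This is correct and is precisely the argument the paper is invoking by citation; the only mildly soft spot is the running-time remark, where one must read Definition~\ref{def:occam} as bounding the running time in terms of the \emph{smallest} fitting $\phi^*$, so that the Occam algorithm is in fact polynomial in $||E||$ alone on inputs consistent with some concept (and can be capped otherwise), but this is a definitional wrinkle rather than a flaw in your reasoning.
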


\section{Classes of CQs as Concept Classes}

Each class of CQs can be naturally viewed as a
concept class. Fix a schema $\mathbf{S}$, an arity $k\geq 0$,
and a class $C$ of $k$-ary CQs over $\mathbf{S}$.
In the associated concept class 
$(C, \mathit{Ex}, \models)$,
$\mathit{Ex}$ is the class of all 
pairs $(I,\textbf{a})$ with $I$ an $\mathbf{S}$-instance
and $\textbf{a}$ a $k$-tuple of elements of the active domain of $I$, and $\models$ describes query answers, that is, $(I,\textbf{a})\models q(\textbf{x})$ iff $\textbf{a}\in q(I)$, for all $q(\textbf{x})\in C$ and $(I,\textbf{a})\in \mathit{Ex}$.
We may abuse notation and refer to this concept
class $(C, \mathit{Ex}, \models)$ simply as $C$ when no 
ambiguity arises. The following theorem summarizes some
basic properties.

\begin{theorem}[\cite{CateD15,Cate2022:extremal}] \label{thm:cq-negative}
Fix any schema $\mathbf{S}$ 
that contains at least one binary relation symbol, and some $k\geq 0$.
\begin{enumerate}
    \item The concept class of $k$-ary CQs over $\mathbf{S}$ is not polynomial-time evaluable (unless $\text{P}=\text{NP}$).
Indeed, its evaluation problem is NP-complete.%
\footnote{\revnote{The evaluation problem takes as input $\phi$ and $e$ and asks if $e\models\phi$.}}
\item The concept class of $k$-ary CQs over $\mathbf{S}$ lacks the polynomial-size fitting property. Indeed, the smallest fitting CQ for a given set of labeled examples is in general exponentially large.
    \item The fitting problem for $k$-ary CQs over $\mathbf{S}$ is coNExpTime-complete.
\end{enumerate}
\end{theorem}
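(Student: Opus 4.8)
\medskip
\noindent\textbf{Proof plan.} \emph{Part~1.} Membership in NP is the standard guess-and-verify argument: given a $k$-ary CQ $q(\mathbf{x})=\exists\mathbf{y}\,(\alpha_1\wedge\cdots\wedge\alpha_n)$, an instance $I$, and a tuple $\mathbf{a}\in\adom(I)^{k}$, one guesses a mapping $h$ from the variables of $q$ to $\adom(I)$ with $h(\mathbf{x})=\mathbf{a}$ and checks in polynomial time that $h(\alpha_i)\in I$ for all $i$. For NP-hardness I would use the Chandra--Merlin encoding: fixing a binary symbol $E\in\mathbf{S}$, a graph $G$ becomes the Boolean CQ whose variables are the vertices of $G$ and whose atoms are the $E$-atoms for the (directed) edges of $G$, with a directed copy of $K_3$ as the instance; then $G$ is $3$-colourable iff that query is true on $K_3$. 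The $k$ answer variables are attached to the rest of the query via a short rigid ``pendant'' gadget (fresh atoms $E(x_j,z)$ with a fresh variable $z$, plus matching pendant facts in the instance) that is independent of the colouring part.

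\emph{Part~2.} The plan is to exhibit, for each $n$, a set $E_n$ of labelled examples with $\|E_n\|$ polynomial in $n$ that admits a fitting CQ but whose fitting CQs all have size at least $2^{n}$; this already refutes the polynomial-size fitting property. I treat the Boolean case; the general $k$-ary case follows by adding to every example the same fixed ``pendant'' answer gadget as in Part~1. Work over $\{E\}$, let $p_1<\cdots<p_n$ be the first $n$ odd primes, and put $N=\prod_i p_i\ (\ge 2^{n})$. Let $E_n$ consist of the directed cycles $C_{p_1},\dots,C_{p_n}$ as positive examples and the directed $2$-cycle $C_2$ as the sole negative example. The argument uses three elementary observations (note first that any fitting CQ can only use $E$-atoms, since the other relations are empty in every example). (i)~A CQ $q$ is positive on every $C_{p_i}$ iff $q$ maps homomorphically into their direct (categorical) product, which --- the $p_i$ being pairwise coprime --- is the single directed cycle $C_N$. (ii)~$q\not\to C_2$ iff the underlying undirected graph of $q$ is not bipartite, i.e.\ contains an odd cycle. (iii)~A homomorphism $q\to C_N$ is a labelling $\lambda\colon\mathrm{var}(q)\to\mathbb{Z}_N$ with $\lambda(v)=\lambda(u)+1$ for every atom $E(u,v)$; traversing a cycle of $q$ of length $\ell=f+b$ (with $f$ atoms followed forward and $b$ backward) yields $f-b\equiv0\pmod N$, so if $\ell$ is odd then $f-b$ is odd and nonzero, whence $|f-b|\ge N$ and $\ell\ge N$. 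Combining (ii) and (iii): every fitting $q$ contains an odd cycle, which has length $\ge N$, so $q$ has at least $N$ atoms and $|q|\ge N$. Conversely, the $N$-cycle CQ $C_N$ is itself a fitting CQ: $C_N\to C_{p_i}$ since $p_i\mid N$, and $C_N\not\to C_2$ since $N$ is odd. As $\|E_n\|=O(\sum_i p_i)$ is polynomial in $n$ whereas $N\ge 2^{n}$, the claim follows; a more intricate construction even yields fitting CQs of size exponential in $\|E\|$~\cite{CateD15}.

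\emph{Part~3.} For the upper bound I would use the canonical-query machinery. Assuming the set $E^{+}$ of positive examples is non-empty (the degenerate case $E^{+}=\emptyset$ is handled directly), let $q_E$ be the CQ obtained from the direct product $\prod_{(I,\mathbf{a})\in E^{+}}(I,\mathbf{a})$ by taking the product tuple as answer tuple; if some component of that tuple falls outside the active domain then no fitting CQ exists at all, a condition checkable in polynomial time. By the universal property of products, $q_E$ is the logically most specific CQ positive on every positive example, so a fitting CQ exists iff $q_E$ itself fits, i.e.\ iff $q_E\not\to(J,\mathbf{b})$ for every negative example $(J,\mathbf{b})$. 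Now $q_E$ has size at most $2^{\|E\|}$ and is computable in exponential time, and deciding a single $q_E\to(J,\mathbf{b})$ is CQ evaluation, hence (by Part~1 applied to an exponential-size query) in NP in $|q_E|+|J|=2^{O(\|E\|)}$, i.e.\ in NExpTime; therefore ``no fitting CQ exists'' is in NExpTime, placing the fitting problem in coNExpTime. For the lower bound, I observe that the fitting problem restricted to inputs with a single negative example is exactly the complement of the Product Homomorphism Problem --- given structures $A_1,\dots,A_m$ and $B$, is $A_1\times\cdots\times A_m\to B$? --- which is NExpTime-complete, already over a fixed schema with a binary relation symbol, by~\cite{CateD15}; the case $k\ge 1$ follows by the same pendant padding as before. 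I expect this last step --- the NExpTime-hardness of the Product Homomorphism Problem, which requires encoding an exponential-size object (such as a $2^{n}\times 2^{n}$ tiling) as a product of polynomially many small structures, engineered so that the product does not collapse under homomorphisms --- to be the main obstacle; it is precisely the content of~\cite{CateD15}, which I would cite rather than reprove. Part~1 and the upper bound of Part~3 are routine, and the core-size argument for Part~2 is elementary once the prime-cycle family is in place.
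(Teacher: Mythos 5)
Your proposal is essentially correct, but note that the paper itself does not prove Theorem~\ref{thm:cq-negative} at all --- it is stated with a citation to \cite{CateD15,Cate2022:extremal} --- so there is no in-paper proof to compare against; the closest internal analogue is the proof of Theorem~\ref{thm:nopolyfit}, and your Part~2 is the same prime-cycle idea in a cleaner form (directed prime cycles with $C_2$ as the negative example, versus the paper's lassos with an $R(b,b)$-loop). The three observations (i)--(iii) are sound: the product of coprime directed cycles is $C_N$, non-homomorphism to $C_2$ is non-bipartiteness, and the net-length argument forces any odd cycle of a query mapping to $C_N$ to have length at least $N$; like the paper's own Theorem~\ref{thm:nopolyfit}, this yields a superpolynomial ($2^{\Theta(\sqrt{\|E\|\log\|E\|})}$) rather than truly exponential bound, and you rightly defer the stronger ``exponential in $\|E\|$'' claim to \cite{CateD15}. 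Your Part~3 upper bound (fitting CQ exists iff the canonical query of the product of the positive examples avoids every negative example, giving a coNExpTime procedure) is the standard argument from \cite{CateD15}, and citing that paper for the NExpTime-hardness of the Product Homomorphism Problem matches the level of detail the paper itself adopts.

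Two small points to tighten. First, in Part~1 the phrase ``a directed copy of $K_3$'' must mean $K_3$ with \emph{both} orientations of each edge (equivalently, encode each undirected edge of $G$ by a symmetric pair of atoms); if one instead used the directed $3$-cycle, homomorphism to it would not coincide with $3$-colourability (e.g., a transitively oriented triangle has no homomorphism to the directed $3$-cycle although $K_3$ is $3$-colourable). Second, the degenerate case $E^{+}=\emptyset$ in Part~3 is genuinely fiddly if ``handled directly''; the clean fix is to add as a dummy positive example the one-element instance containing all facts $R(u,\ldots,u)$ with answer tuple $(u,\ldots,u)$, which every CQ satisfies and hence does not change the set of fitting queries, after which your product argument applies verbatim. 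Neither point affects the correctness of the overall plan.
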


%
Let us now consider restricted classes of (unary) CQs that still include path-CQs.
We will see in the next section that \emph{every} such class of CQs has an NP-hard fitting problem (cf.~Theorem~\ref{thm:fitting-is-hard}). We observe here that
every such class of CQs lacks the polynomial-size fitting property:

\begin{theorem}
\label{thm:nopolyfit}
Fix a schema $\mathbf{S}$ that contains at least a binary and a unary predicate,
and let $C$ be any class of unary CQs over $\mathbf{S}$ that includes all path-CQs. Then $C$  lacks the polynomial-size fitting property.
\end{theorem}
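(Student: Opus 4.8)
The plan is to show that for every polynomial $\rho$ there is a finite set of labeled examples $E$ that is fitted by some path-CQ (hence by some concept in $C$) but by no CQ with at most $\rho(\|E\|)$ atoms. Fix $k$, let $p_1<\dots<p_k$ be the first $k$ primes and $N=\prod_{i=1}^{k}p_i$. For $m\ge 1$ let $\mathrm{Cyc}_m$ be the $\{R,P\}$-instance with $\adom$ equal to $\{0,\dots,m-1\}$, edge facts $R(j,(j{+}1)\bmod m)$ for every $j$, and the single unary fact $P(0)$. Put
\[
 E_k \;=\; \{(\mathrm{Cyc}_{p_1},0)^{+},\dots,(\mathrm{Cyc}_{p_k},0)^{+}\}\cup\{(\mathrm{Cyc}_4,0)^{-}\},
\]
where $+$ and $-$ indicate the labels. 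Each positive example $(\mathrm{Cyc}_{p_i},0)$ has size polynomial in $p_i$ and $(\mathrm{Cyc}_4,0)$ has constant size, so $\|E_k\|$ is polynomial in $k$; on the other hand the primorial $N=\prod_{i\le k}p_i$ grows faster than every polynomial in $k$ (elementarily: $p_i<2^{i}$ by Bertrand's postulate, while $N\ge(k{+}1)!$). Hence it suffices to establish: (i) some path-CQ fits $E_k$; and (ii) every CQ fitting $E_k$ has at least $N$ atoms.

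For (i) I would use the path-CQ
\[
 q^{\dagger}(x_1)\coloneq\exists x_2\cdots x_{N+1}\Bigl(\textstyle\bigwedge_{i=1}^{N}R(x_i,x_{i+1})\wedge P(x_{N+1})\Bigr),
\]
which belongs to $C$. The only homomorphism from the canonical structure of $q^{\dagger}$ into $\mathrm{Cyc}_m$ sending $x_1$ to $0$ sends $x_j$ to $(j{-}1)\bmod m$; since $p_i\mid N$ it sends $x_{N+1}$ to $0\in P^{\mathrm{Cyc}_{p_i}}$, so $q^{\dagger}$ is positive on every $(\mathrm{Cyc}_{p_i},0)$, whereas for $m=4$ it sends $x_{N+1}$ to $N\bmod 4=2\notin P^{\mathrm{Cyc}_4}$ (note $N=2\cdot(\text{odd})$), so $q^{\dagger}$ is negative on $(\mathrm{Cyc}_4,0)$. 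Thus $q^{\dagger}$ fits $E_k$.

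For (ii), which is the core of the argument, let $q$ be a unary CQ over $\mathbf S$ fitting $E_k$, with canonical structure $\mathcal A_q$, and suppose for contradiction that $q$ has fewer than $N$ atoms. Positivity on $(\mathrm{Cyc}_{p_1},0)$ forces $q$ to use only $R$ and $P$, as the instances in $E_k$ contain no other facts. For each $i$ fix a homomorphism $h_i\colon\mathcal A_q\to\mathrm{Cyc}_{p_i}$ with $h_i(x_1)=0$. Since $h_i$ advances the position by $+1$ along every forward $R$-edge, any simple cycle of $\mathcal A_q$ has net ``forward minus backward'' edge count divisible by $p_i$ for all $i$, hence by $N$; being absolutely bounded by the number of $R$-atoms ($<N$), this count is $0$. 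Therefore $\mathcal A_q$ admits a potential $\pi\colon\mathrm{vars}(q)\to\mathbb Z$ with $\pi(y)=\pi(x)+1$ whenever $R(x,y)\in q$, normalized so that $\pi(x_1)=0$ on the component of $x_1$. Then $h_i$ sends each variable $v$ in $x_1$'s component to $\pi(v)\bmod p_i$; if additionally $P(v)\in q$ this must equal $0$, so $p_i\mid\pi(v)$ for every $i$, i.e.\ $N\mid\pi(v)$, while $|\pi(v)|$ is at most the number of $R$-atoms, hence $<N$; so $\pi(v)=0$. The same computation shows that any two $P$-variables lying in one connected component of $\mathcal A_q$ receive the same $\pi$-value. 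I can now define a homomorphism $h\colon\mathcal A_q\to\mathrm{Cyc}_4$ with $h(x_1)=0$ by setting $h(y)=\pi(y)\bmod 4$ on $x_1$'s component and $h(y)=(\pi(y)+s)\bmod 4$ on each other component, with the shift $s$ chosen so that the common $\pi$-value of its $P$-variables is mapped to $0$ (any $s$ if the component has no $P$-variable). This $h$ respects $R$ (as $\mathrm{Cyc}_4$ is a directed $4$-cycle and $\pi$ is a potential) and $P$ (every $P$-variable is mapped to $0\in P^{\mathrm{Cyc}_4}$), so $q$ is positive on $(\mathrm{Cyc}_4,0)$ --- contradicting that $q$ fits $E_k$. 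Hence $q$ has at least $N$ atoms. Together with (i) and the growth of $N$, this proves the theorem.

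The one genuinely delicate point, and the reason $\mathrm{Cyc}_4$ is a cycle rather than an acyclic instance, is making step (ii) work for CQs that are not path-CQs: an acyclic negative example would be fitted by a short ``long enough path'' query, so one needs a cyclic negative example whose $P$-decoration is incompatible --- modulo $4$ --- with the decorations forced by the positive examples (``$P$ every $N$ steps''), and the potential/winding-number bookkeeping is exactly what converts that incompatibility into the size bound $|q|\ge N$. The remaining steps --- the homomorphism counts in (i), and the elementary fact that the primorial outgrows every polynomial in $\|E_k\|$ --- are routine.
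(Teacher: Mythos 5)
Your proof is correct and follows essentially the same strategy as the paper's: positive examples built from cycles whose lengths are the first $k$ primes force any fitting CQ to place a $P$-atom at a net $R$-distance from the free variable that is divisible by the primorial $\prod_{i\le k}p_i$ (which is superpolynomial in $||E_k||$), and this divisibility is extracted via the same winding-number/potential argument on homomorphisms into the prime-length cycles. The only differences are in the gadgets: the paper glues lasso-shaped components into a single instance and uses a $P$-free self-loop as the negative example, whereas you use disjoint pure cycles with a $4$-cycle as the negative example and handle possibly disconnected queries via per-component shifts --- both work equally well.
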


\begin{proof}
Let $R\in \mathbf{S}$ be binary and $P\in\mathbf{S}$ unary.
For $m\geq 1$, let $L_m$ denote the ``lasso'' instance, with active domain 
$a^m_0, \ldots, a^m_{2m-1}$
consisting of the facts
$R(a^m_i, a^m_{i+1})$ for all $i<2m-1$ and $R(a^m_{2m-1},a^m_m)$ and $P(a^m_m)$.

For $i\geq 1$, let $p_i$ be the $i$-th prime number (where $p_1 = 2$). 
By the prime number theorem, $p_i = O(i\log i)$.

Finally, for $n \geq 1$, let $I_n$ be the disjoint union of $L_{p_i}$ for $i=1, \ldots, n$,
extended with the fact $R(b,b)$ for a fresh value~$b$.
We now construct our set of examples $E_n$ as follows:
\begin{itemize}
    \item Positive example $(I_n,a^{p_i}_0)$ for $i=1\ldots n$.
    \item Negative example $(I_n,b)$.
\end{itemize}

It is easy to see that a fitting path-CQ for $E_n$ exists, namely
the query \[q(x_1)\coloneq \exists x_2\ldots x_k (R(x_1,x_2)\land\cdots\land R(x_{k-1},x_k)\land P(x_k))\] where $k = \Pi_{i=1\ldots n} (p_i)$.

We claim that every CQ that fits the examples must be of size at least
$2^n$.
Let $q(x)$ be any CQ that fits the examples. Since positive and negative examples are based on the same instance, we may assume
that $q$ is connected.
First of all, note that $q$ must contain a conjunct of the form $P(y)$
(otherwise it would fail to fit the negative example). Furthermore, $y$
is not the free variable $x$ and $q$ uses only the relation symbols
$P$ and $R$
(otherwise it would fail to fit any positive example).
Consider the 
directed graph where the vertices are the variables of $q$ and there is an edge from variable $z$ to variable $z'$ iff the atom $R(z, z')$ occurs
in $q$. Since $q$ is connected, there is an undirected
path connecting $x$ to $y$. Take any such path of minimal length. We can represent it as a sequence
\[ x = x_0,\alpha_0, x_1, \alpha_1, \ldots, x_\ell = y\]
where for each $i< \ell$, $\alpha_i$ is an atom that occurs in $q$ that is either $R(x_i, x_{i+1})$ 
(then $\alpha_i$ \revnote{is} a ``forward edge'') or
$R(x_{i+1},x_i)$ (then $\alpha_i$ \revnote{is} a ``backward edge'').
We define the net-length of this path to be the number of forward edges minus the number of backward edges. 

Clearly, in order for the query $q$ to be satisfied
in a lasso instance $L_m$, the net length of the above
path must be divisible by $m$.
Therefore, since $q$ fits all
the examples constructed above, the net-length must be divisible by $p_i$, for all $i=1\ldots n$, and thus at least $\prod_{i=1\ldots n} (p_i)$.
It follows, then, that also the length (in the ordinary sense)
of the path must be at least $\prod_{i=1\ldots n} (p_i)$.
Therefore, every CQ that fits the above examples must have at least
$\prod_{i=1\ldots n} (p_i)$ variables, which exceeds $2^n$. 
\end{proof}

\section{Failure of Strong PAC Learnability}

By Proposition~\ref{prop:strong-learnability}, Theorem~\ref{thm:nopolyfit} implies:

\begin{corollary}\label{cor:non-strong-CQ}
Fix any schema $\mathbf{S}$ that contains at least a
binary relation symbol and a unary relation symbol. Let $C$ be any class of unary CQs over $\mathbf{S}$ that includes all path-CQs.
Then $C$ is not strongly efficiently
PAC learnable. 
\end{corollary}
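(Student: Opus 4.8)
The plan is to obtain this as an immediate consequence of Proposition~\ref{prop:strong-learnability} together with Theorem~\ref{thm:nopolyfit}. The hypotheses imposed here on $\mathbf{S}$ and on $C$ coincide exactly with those of Theorem~\ref{thm:nopolyfit}: $\mathbf{S}$ contains a binary and a unary relation symbol, and $C$ is a class of unary CQs over $\mathbf{S}$ that includes all path-CQs. Viewing $C$ as the concept class $(C,\mathit{Ex},\models)$ introduced at the start of Section~3, Theorem~\ref{thm:nopolyfit} tells us that $C$ does \emph{not} have the polynomial-size fitting property.

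The second step is simply to take the contrapositive of Proposition~\ref{prop:strong-learnability}. That proposition asserts that strong efficient PAC learnability entails the polynomial-size fitting property; equivalently, any concept class that lacks the polynomial-size fitting property is not strongly efficiently PAC learnable. Applying this to the concept class $C$, whose failure of the polynomial-size fitting property was established in the previous step, yields the statement.

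I do not anticipate any real obstacle here: the genuine technical content has already been carried out in the lasso-and-primes construction underlying Theorem~\ref{thm:nopolyfit}, and Proposition~\ref{prop:strong-learnability} packages the generic learning-theoretic half. The only point to be careful about is that both results are invoked for the \emph{same} concept class --- namely $(C,\mathit{Ex},\models)$ with $\mathit{Ex}$ ranging over all instance/tuple pairs over $\mathbf{S}$ --- so that no restriction of the example space silently creeps in between the two invocations; once this identification is noted, the chaining of the two results is immediate and the corollary follows.
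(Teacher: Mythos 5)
Your proposal is correct and is exactly the paper's argument: the paper derives this corollary by applying Proposition~\ref{prop:strong-learnability} (in contrapositive) to Theorem~\ref{thm:nopolyfit}, just as you do. Your added remark about keeping the example space fixed as all of $\mathit{Ex}$ across both invocations is a sensible sanity check but requires no further work.
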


Alternatively, Corollary~\ref{cor:non-strong-CQ} can be
shown using a VC-dimension argument. In fact, we may then
even drop the `efficiently' from the statement. We define
\emph{strong PAC learnability} in the same way as
\emph{strongly efficient PAC learnability} (cf.~Definition~\ref{def:pac}) except that $A$ is not required to run in polynomial time and $f$ is 
not required to be a polynomial function. 


\begin{theorem}
Fix any schema $\mathbf{S}$ that contains at least a
binary relation symbol and a unary relation symbol. Let $C$ be any class of unary CQs over $\mathbf{S}$ that includes all path-CQs. Then $C$ is 
not strongly PAC learnable.
\end{theorem}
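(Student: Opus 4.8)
The plan is to show that the concept class $C$ has infinite VC dimension, which immediately rules out (strong) PAC learnability, since a finite VC dimension is necessary for any sample-size bound that depends only on $\epsilon$ and $\delta$. Concretely, I will exhibit, for every $N$, a set of $N$ examples that is shattered by $C$, all examples pertaining to a single instance so that even the single-instance restriction fails.

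The construction reuses the prime-number / lasso idea from the proof of Theorem~\ref{thm:nopolyfit}. Fix $N$; let $p_1,\dots,p_N$ be the first $N$ primes and let $I_N$ be the disjoint union of the lasso instances $L_{p_1},\dots,L_{p_N}$ (the fact $R(b,b)$ is not needed here). Take the $N$ examples $e_i = (I_N, a^{p_i}_0)$ for $i = 1,\dots,N$. I claim this set is shattered. Given any target subset $S \subseteq \{1,\dots,N\}$, consider the path-CQ $q_S(x_1) \coloneq \exists x_2\ldots x_{k}\,(R(x_1,x_2)\land\cdots\land R(x_{k-1},x_k)\land P(x_k))$ where $k = \prod_{i\in S} p_i$ (and $q_S$ is a single atom $P(x_1)$ if $S=\emptyset$, or one can instead use $k$ one plus a multiple of $\mathrm{lcm}$; a cleaner choice: pick $k$ to be a positive integer with $k \equiv 0 \pmod{p_i}$ for $i\in S$ and $k \not\equiv 0 \pmod{p_i}$ for $i\notin S$, which exists by the Chinese Remainder Theorem applied to the moduli $p_i^2$, or simply $k=\prod_{i\in S}p_i$ works because $p_j \nmid \prod_{i\in S}p_i$ when $j\notin S$). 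Then, exactly as in the net-length analysis of Theorem~\ref{thm:nopolyfit}, $a^{p_i}_0 \in q_S(I_N)$ iff $p_i \mid k$ iff $i \in S$; note that in a single lasso $L_{p_i}$ a directed $R$-path of length $k$ starting at $a^{p_i}_0$ returns to a vertex carrying $P$ precisely when $p_i \mid k$, and the components are disjoint so no interaction occurs. Hence $q_S$ realizes exactly the labeling $S$ on $\{e_1,\dots,e_N\}$, and since each $q_S$ is a path-CQ it lies in $C$.

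With shattering established, I would invoke the standard lower bound: if a concept class has VC dimension at least $d$, then no PAC algorithm — efficient or not — can succeed from a sample of size independent of the target (indeed, any algorithm needs $\Omega(d/\epsilon)$ examples on some distribution), so a strongly PAC learnable class must have finite VC dimension. Since $d$ here is unbounded, $C$ is not strongly PAC learnable; and because every $e_i$ has the form $(I_N,\mathbf{a})$, the hard distributions are single-instance, so the statement holds even with that restriction.

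The only mildly delicate point — and the step I would be most careful about — is the behaviour of an \emph{arbitrary} path-CQ (or even arbitrary CQ, if one wanted the argument to use concepts outside the path-CQ family) on the lasso instances: one must confirm that for a path-CQ of total $R$-length $k$ evaluated at $a^{p_i}_0$, satisfaction is governed exactly by divisibility of $k$ by $p_i$, with the $P$-atoms causing no trouble because along a path every vertex reached is either on the tail or on the cycle and the cycle vertex $a^{p_i}_{p_i}$ (the unique $P$-vertex) is hit iff the traversed length is $\equiv 0 \pmod{p_i}$ once we are past the tail. This is exactly the net-length computation already carried out in the proof of Theorem~\ref{thm:nopolyfit}, so it can be cited rather than redone. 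Everything else is routine.
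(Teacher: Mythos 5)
Your proposal is correct and follows the same high-level strategy as the paper: exhibit arbitrarily large shattered sets of examples witnessed by path-CQs, conclude infinite VC dimension, and invoke the standard sample-complexity lower bound to rule out any sample size depending only on $\epsilon$ and $\delta$. The difference is in the shattering gadget. The paper uses $n$ \emph{distinct} instances $I_1,\dots,I_n$, each a simple directed path with $P$ on all positions except the $i$-th, and selects subsets by choosing which positions of the path-CQ carry a $P$-atom; you instead reuse the prime/lasso construction from Theorem~\ref{thm:nopolyfit}, place all examples in one disjoint union, and select subsets via divisibility of the path length. Both work; yours has the small bonus that the shattered set lives in a single instance, so the non-learnability persists even under single-instance example distributions, which the paper's shattering construction for this theorem does not directly give (the paper only obtains that strengthening in the later, computational argument for Theorem~\ref{thm:main1}). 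The divisibility analysis is unproblematic for path-CQs because every vertex of a lasso has out-degree one, so the homomorphism from a connected directed path is forced. Two cosmetic points: your displayed query with variables $x_1,\dots,x_k$ has $k-1$ edges, so as written the endpoint is reached after $k-1$ steps and the divisibility test is off by one (the same indexing slip appears in the paper's proof of Theorem~\ref{thm:nopolyfit}); the intended query has $k$ edges, i.e.\ $k+1$ variables, and then $a^{p_i}_0\in q_S(I_N)$ iff $p_i\mid k$ exactly as you state. And for $S=\emptyset$ the clean choice is a single-edge path $\exists x_2(R(x_1,x_2)\wedge P(x_2))$, which rejects every $a^{p_i}_0$ since $p_i\geq 2$, avoiding any doubt about whether $P(x_1)$ alone counts as a path-CQ.
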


\begin{proof}
Let us recall the definition of VC-dimension. We say that a concept class $C$ \emph{shatters} a set of examples $S$ if for every subset $S'\subseteq S$ there is a $c\in C$ such that $S' = \{e\in S\mid e\models c\}$. The \emph{VC-dimension} of $C$ is the cardinality of the largest set of examples that is shattered by $C$, or infinite if arbitrarily large sets can be shattered. The fundamental theorem of statistical machine learning says that a concept class is strongly PAC learnable iff it has finite VC dimension~\cite{Blumer89:learnability}.

Let $\mathbf{S}$ be a schema that contains a unary relation symbol $P$ and a binary relation symbol $R$, and let $C$ be a class of unary CQs over $\mathbf{S}$ that contains all path-CQs. We show that $C$ has infinite VC-dimension. 

Let $n>0$. We construct a set $S$ that contains $n$ examples $(I_1,a_1),\ldots,(I_{n},a_1)$. Each instance $I_i$ contains an $R$-path of length $n-1$ starting at $a_1$, that is, $\adom(I_i)=\{a_1,\ldots,a_{n}\}$ and $R(a_j,a_{j+1})\in I_i$ for all $j\in\{1,\ldots,n-1\}$. Moreover, we include in $I_i$ all facts $P(a_j)$ for $j\neq i$.

To show that $C$ shatters $S$, let $S'\subseteq S$ be an arbitrary subset of $S$ and let $X\subseteq \{1,\ldots,n\}$ be such that $S' = \{(I_i,a_1)\in S\mid i\in X\}$ and set $\overline X = \{1,\ldots,n\}\setminus X$. Let 
$q(x_1)$ be the path-CQ 
\[
  q(x_1)\coloneq\exists x_2\ldots x_n(\bigwedge_{i=1\ldots n-1}\!\!\!\!\!\! R(x_i,x_{i+1}) ~\wedge \bigwedge_{j\in \overline X} P(x_j)). 
\]
One may verify that $S' = 
 \{(I_i,a_1)\in S\mid q(a_1)\in I_i\}$. 
%
\end{proof}

The concept class of 
path-CQs \emph{is} polynomial-time evaluable, as follows from the fact that it forms a subclass of the class of $\alpha$-acyclic CQs, which is polynomial-time evaluable~\cite{Yannakakis81}. We make use of this in the next section.

\begin{theorem}[\cite{Yannakakis81}]
\label{thm:pathCQpoleval}
Fix any schema $\mathbf{S}$.
The concept class of path-CQs over $\mathbf{S}$ is polynomial-time evaluable.
\end{theorem}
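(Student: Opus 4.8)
The plan is to prove Theorem~\ref{thm:pathCQpoleval} by exhibiting a direct polynomial-time evaluation algorithm tailored to the very restricted syntactic shape of path-CQs, rather than invoking the general Yannakakis algorithm for $\alpha$-acyclic queries as a black box. Concretely, a path-CQ has the form $q(x_1)\coloneq\exists x_2\ldots x_n\,(R(x_1,x_2)\land\cdots\land R(x_{n-1},x_n)\land P(x_{j_1})\land\cdots\land P(x_{j_m}))$, so up to rewriting it is determined by a chain length $n-1$ together with a set $J\subseteq\{1,\ldots,n\}$ of positions that must carry the unary predicate $P$. Given an instance $I$ and a candidate answer element $a$, I would compute, by a simple breadth-first-style iteration, for each position $i\in\{1,\ldots,n\}$ the set $W_i\subseteq\adom(I)$ of elements reachable from $a$ by a directed $R$-path of length exactly $i-1$ whose intermediate and final vertices satisfy the $P$-constraints dictated by $J$ on positions $2,\ldots,i$. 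Then $a\in q(I)$ iff $a$ respects the constraint on position $1$ and $W_n\neq\emptyset$.

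The key steps, in order, are: (i) normalize the input path-CQ to extract $n$ and $J$, noting the harmless subtlety that the schema $\mathbf{S}$ may contain several binary symbols, so one should read off which binary symbol plays the role of $R$ and which unary symbol plays the role of $P$ from the atoms themselves — but the argument is uniform in this choice, so I would fix notation $R,P$ once; (ii) set $W_1=\{a\}$ if $1\notin J$ or $P(a)\in I$ holds when $1\in J$, and $W_1=\emptyset$ otherwise; (iii) for $i=2,\ldots,n$, set $W_i=\{\,c\in\adom(I)\mid \exists b\in W_{i-1}\text{ with }R(b,c)\in I\,\}$, and then if $i\in J$ intersect with $\{c\mid P(c)\in I\}$; (iv) accept iff $W_n\neq\emptyset$. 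Correctness follows by a straightforward induction on $i$: $c\in W_i$ iff there is a homomorphism from the length-$(i-1)$ decorated initial segment of $q$ into $I$ mapping $x_1\mapsto a$ and $x_i\mapsto c$. For the running time, each step computes $W_i$ from $W_{i-1}$ by scanning the $R$-facts and $P$-facts of $I$, costing $O(\|I\|)$ per step, for a total of $O(n\cdot\|I\|)$, which is polynomial in $|q|+\|I\|$; no $W_i$ ever needs more than $|\adom(I)|$ elements, so all intermediate objects stay polynomially bounded.

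I do not expect any genuine obstacle here; the statement is essentially a sanity-check lemma and the only thing to be a little careful about is the bookkeeping in step (i) — that a path-CQ as written might mention a binary relation symbol that is not the ``canonical'' $R$, or might repeat variables in a way that collapses the chain — but after the obvious normalization (identify variables forced equal, detect inconsistency, reindex along the chain) one is left with exactly the clean shape above, and if normalization detects an inconsistency the query is unsatisfiable and the algorithm rejects. An alternative, even shorter route would simply be to cite Theorem~\ref{thm:cq-negative}'s observation that $\alpha$-acyclic CQs are polynomial-time evaluable together with the already-noted fact that every path-CQ is $\alpha$-acyclic (indeed Berge-acyclic); I would mention this as the one-line justification and include the explicit dynamic-programming algorithm above only for self-containedness, since the explicit bound $O(n\cdot\|I\|)$ is convenient for the complexity estimates in the following section.
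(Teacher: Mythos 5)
Your proposal is correct, but it takes a more self-contained route than the paper does. The paper offers no explicit algorithm at all: it simply observes that every path-CQ is Berge-acyclic, hence $\alpha$-acyclic, and then invokes Yannakakis's polynomial-time evaluation algorithm for $\alpha$-acyclic CQs as a black box (this is why the theorem carries the citation \cite{Yannakakis81} and no proof environment). Your primary argument instead gives a direct layered-reachability computation: maintaining the sets $W_i$ of elements reachable from $a$ by a directed $R$-walk of length $i-1$ satisfying the $P$-constraints on positions up to $i$, and accepting iff $W_n\neq\emptyset$. This is sound precisely because the variables $x_1,\ldots,x_n$ of a path-CQ are pairwise distinct and a homomorphism need not be injective, so existence of a (not necessarily simple) decorated walk is exactly the evaluation condition; the induction on $i$ you sketch establishes this, and the $O(n\cdot\|I\|)$ bound is right. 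What your approach buys is elementariness and an explicit running-time bound without appealing to acyclicity machinery; what the paper's one-liner buys is brevity and consistency with the rest of the exposition, which repeatedly uses $\alpha$-acyclicity (e.g.\ in Proposition~\ref{prop:ptime-evaluability-trees}). Your closing remark correctly identifies the paper's actual justification as the ``alternative, even shorter route.'' The normalization worries in your step (i) are harmless but largely unnecessary: by the paper's definition a path-CQ already comes with distinct chain variables and a single designated binary symbol $R$ and unary symbol $P$, so no variable identification or inconsistency detection is needed.
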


\section{Non-Efficient PAC Learnability}

We now consider PAC learnability in the non-strong version
and show that \emph{no class of unary CQs that includes all path-CQs is
efficiently PAC learnable}, cf.~Theorem~\ref{thm:main} from the introduction. 

Recall that we cannot use Proposition~\ref{prop:pac-vs-fitting} directly to prove non-efficient
PAC learnability, for two reasons. First, the polynomial-size fitting property does not hold for path-CQs. And second, the classes that we consider may contain CQs that are not path-CQs, and thus polynomial-time evaluability also fails, despite Theorem~\ref{thm:pathCQpoleval}. To circumvent the latter issue, we work with a restricted class of instances.

\subsection{Tree-Shaped Instances}

\begin{definition}[Tree-Shaped Instances and CQs]
\label{def:tree-shaped}
Let $\mathbf{S}$ be a schema that consists of a binary relation symbol $R$ and any number of unary relation symbols, and let $I$ be an $\mathbf{S}$-instance. We say that $I$ is 
\emph{tree-shaped} if the following two conditions hold:
\begin{enumerate}
    \item There is a function $\mathit{level} \colon \adom(I)\to \mathbb{N}$ such 
        that, for each fact $R(a,b)$ of $I$,  $\mathit{level}(b)=\mathit{level}(a)+1$.
     \item $I$ does not contain two binary facts $R(a,b), R(a',b)$ that agree on the second value but not on the first.
\end{enumerate}
A CQ over $\mathbf{S}$ is said to be tree-shaped if its canonical 
instance is tree-shaped.%
\footnote{The \emph{canonical instance} of a CQ is
 the instance whose active domain consists 
of the variables of the query and whose facts are
the conjuncts of the query.}
\end{definition}

\begin{lemma}\label{lem:tree-shaped}
Fix a schema $\mathbf{S}$ that consists of one binary relation symbol and any number of unary relation symbols. Given a
CQ $q$ over $\mathbf{S}$, 
\begin{enumerate}
    \item we can test in polynomial time whether there exists a tree-shaped instance $I$ such that $q(I)\neq\emptyset$,
    \item if the answer to the above question is positive, then we can 
construct in polynomial time a
tree-shaped CQ $q'$ such that for all tree-shaped
instances $I$, $q(I)=q'(I)$.
\end{enumerate}
\end{lemma}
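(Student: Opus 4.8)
The plan is to analyze the structure of the canonical instance $I_q$ of the given CQ $q$ and show that the obstructions to tree-shapedness are exactly what one would expect: cycles in the $R$-structure, $R$-edges that are not ``level-consistent'', and pairs of $R$-edges sharing a target but not a source. For part (1), first I would check whether the directed graph underlying $I_q$ (vertices = variables, edges = $R$-atoms) is a DAG in which every node has a well-defined level, i.e.\ whether the ``level'' constraint $\mathit{level}(b)=\mathit{level}(a)+1$ can be satisfied. Concretely: build the graph, and for each weakly connected component try to assign integer levels by propagating the constraint along edges (a BFS/DFS that sets $\mathit{level}(b):=\mathit{level}(a)+1$ whenever it traverses $R(a,b)$); if this ever produces a conflict, then no homomorphism from $q$ into any tree-shaped instance can exist, because such a homomorphism would have to be injective on the relevant path structure in a way that forces the same conflict, so $q(I)=\emptyset$ for all tree-shaped $I$. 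The subtlety is that a homomorphism into a tree-shaped instance need not be injective, so I must argue that the level function on the target instance pulls back to a consistent level function on the part of $q$ that maps into it; since levels are preserved along $R$-edges, any $R$-path in $q$ of net length $d$ must map to an $R$-path of net length $d$ in the target, and consistency of levels in the target forces $d$ to be well-defined — i.e.\ all $R$-paths between two variables have the same net length — which is precisely the condition the propagation checks. If the level check succeeds, the answer is ``yes'': the canonical instance itself, after collapsing variables that the propagation is forced to identify (see below), embeds into some tree-shaped instance.

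For part (2), assuming the level check succeeds, I would construct $q'$ by taking a quotient of the canonical instance of $q$. The idea: because we only care about evaluation over tree-shaped instances, and tree-shaped instances satisfy condition (2) (no two $R$-edges share a target with distinct sources), any homomorphism $h$ from $I_q$ into a tree-shaped $I$ must identify any two variables $z,z'$ that are both $R$-predecessors of a common variable; iterating, $h$ factors through the smallest quotient $I_q/{\sim}$ of $I_q$ that (a) respects this ``merge predecessors of a shared successor'' rule and (b) merges the images of variables forced to the same level along a common successor chain. Compute this quotient by a union-find / congruence-closure procedure: repeatedly, whenever $R(z,w)$ and $R(z',w)$ are both present, union $z$ and $z'$; propagate until fixpoint. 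This terminates in polynomial time. Let $q'$ be the CQ whose canonical instance is the resulting quotient (with the free variable being the class of $x_1$), keeping all unary atoms. Then $q'$ is tree-shaped: condition (2) holds by construction of the quotient, and condition (1) holds because the level function, which was consistent on $I_q$, descends to the quotient (we only merged variables that the level-propagation assigned the same level, or that were forced to the same level by condition (2) together with the level constraint — one checks these are the same). Finally, $q(I)=q'(I)$ for every tree-shaped $I$: the direction $q'(I)\subseteq q(I)$ follows since there is a homomorphism $q'\to q$ (the quotient map splits, or rather: $q$ maps homomorphically onto $q'$'s canonical instance, giving containment one way), and $q(I)\subseteq q'(I)$ because any homomorphism $I_q\to I$ factors through the quotient as argued, yielding a homomorphism $I_{q'}\to I$ with the same behavior on the free variable.

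The main obstacle I anticipate is getting the quotient construction exactly right so that the resulting $q'$ genuinely satisfies \emph{both} conditions of Definition~\ref{def:tree-shaped} simultaneously, and proving the equivalence $q(I)=q'(I)$ cleanly in both directions. In particular, one must be careful that merging predecessors (for condition (2)) does not destroy level-consistency (condition (1)) — this should be fine because two $R$-predecessors of a common $w$ are both at level $\mathit{level}(w)-1$, so merging them is level-preserving — but the argument that the iterated closure of these merges still lands in a level-consistent structure, and that no \emph{further} identifications are forced (so that $q'$ is well-defined and not degenerate), needs to be spelled out. A secondary subtlety: I should confirm that when the part-(1) level check \emph{fails}, it is genuinely because no tree-shaped instance admits a homomorphic image of $q$, i.e.\ I need the contrapositive — a tree-shaped $I$ with $q(I)\neq\emptyset$ yields, via the target's level function, a witness that the propagation in part (1) succeeds — which is the cleanest way to see part (1) is a correct decision procedure.
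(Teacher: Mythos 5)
Your proposal is correct and follows essentially the same route as the paper: quotient the canonical instance of $q$ by the congruence generated by identifying the sources of $R$-atoms with a common (or already-identified) target, observe that every homomorphism into a tree-shaped instance must factor through this quotient (giving $q(I)\subseteq q'(I)$) while the quotient map itself gives the converse containment, and certify the positive case by noting that the quotient is itself a tree-shaped instance satisfied by $q'$. The only cosmetic difference is that you decide part~(1) by propagating levels over $I_q$ whereas the paper checks for a directed cycle in the quotient; these are equivalent tests for the same obstruction.
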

    
\begin{proof}
It suffices to prove the claim for connected CQs (the general case
then follows by a component-wise analysis). Therefore, let $q$ 
be a connected CQ.


Let $\sim$ be the 
smallest equivalence relation over the variables of $q$ such that, 
whenever $R(u,v)$ and $ R(u',v')$ are conjuncts of $q$ and $v\sim v'$
then also $u\sim u'$. Let $q'$ be the quotient of $q$ w.r.t.~$\sim$ (that is, $q'$ is obtained from $q$ by choosing a representative
of each $\sim$-equivalence class, and replacing every occurrence of a variable $x$ by the representative of the
$\sim$-equivalence class of $x$).
It is easy to see that, for all tree-shaped instances $I$, $a\in q(I)$ iff $a\in q'(I)$ (here, 
the left-to-right direction uses the tree-shape of $I$, 
while the right-to-left direction holds for every instance $I$).

If $q'$ contains a directed cycle, then clearly, $q'(I)=\emptyset$
for all tree-shaped instances $I$, and we are done.

Assume, therefore, that $q'$ does not contain a directed cycle.
Since
$q'$ is connected, there must then exist a (free or existentially
quantified) variable $y$ for which $q'$ does not 
contain any conjunct of the form $R(\cdot,y)$. 
Furthermore, any simple path from $y$ to any other
variable $z$ must consist entirely of forward edges,
otherwise, the path would be of the form 
\[y\xrightarrow{R} \cdots \xrightarrow{R} u \xrightarrow{R} v \xleftarrow{R} w \xleftarrow{R} \cdots \xleftarrow{R} z\]
and then $u$ and $w$ would have been 
identified when we constructed $q'$.
It follows that $q'$ is tree-shaped.
Furthermore, let $I_{q'}$ be the canonical instance of~$q'$.
Then, clearly, $q'(I_{q'})\neq \emptyset$.
\end{proof}

Since tree-shaped CQs are $\alpha$-acyclic
and hence can be evaluated in polynomial time
(on the class of all instances)~\cite{Yannakakis81}, Lemma~\ref{lem:tree-shaped} immediately implies:

\begin{proposition}\label{prop:ptime-evaluability-trees}
Fix a schema $\mathbf{S}$ that contains one binary relation symbol and any number of unary relation symbols. 
For every class $C$ of CQs over $\mathbf{S}$, the concept class \mbox{$(C,\mathit{Ex}_{\mathit{tree}},\models)$}, where $\mathit{Ex}_{\mathit{tree}}$ is the set of tree-shaped $\mathbf{S}$-instances, is polynomial-time evaluable.
\end{proposition}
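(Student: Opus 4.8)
The plan is to reduce evaluation on tree-shaped instances to the evaluation of an $\alpha$-acyclic query on an arbitrary instance, using Lemma~\ref{lem:tree-shaped} as a normalization step. Fix a class $C$ of CQs over $\mathbf{S}$, a query $q\in C$, and a pair $(I,\mathbf{a})\in\mathit{Ex}_{\mathit{tree}}$; we must decide in polynomial time whether $\mathbf{a}\in q(I)$.

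First I would invoke part~(1) of Lemma~\ref{lem:tree-shaped} to test, in polynomial time, whether there is \emph{any} tree-shaped instance on which $q$ has a non-empty answer set. If there is none, then in particular $q(I)=\emptyset$, so $\mathbf{a}\notin q(I)$ and we may reject immediately. Otherwise, part~(2) of Lemma~\ref{lem:tree-shaped} lets us compute, in polynomial time, a tree-shaped CQ $q'$ with $q(J)=q'(J)$ for every tree-shaped instance $J$; in particular $q(I)=q'(I)$, so it suffices to decide whether $\mathbf{a}\in q'(I)$. Since $q'$ is tree-shaped it is $\alpha$-acyclic, and hence — by the classical algorithm of Yannakakis~\cite{Yannakakis81} — it can be evaluated on the arbitrary instance $I$ in time polynomial in $|q'|+|I|$. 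As $|q'|\leq|q|$, the whole procedure runs in polynomial time, establishing polynomial-time evaluability of $(C,\mathit{Ex}_{\mathit{tree}},\models)$.

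The one point to be careful about is that $C$ is an \emph{arbitrary} class of CQs, so $q$ itself need not be tree-shaped, $\alpha$-acyclic, or otherwise tractable to evaluate; what makes the argument work is precisely that Lemma~\ref{lem:tree-shaped} produces a substitute query $q'$ that is guaranteed both to be tractable and to agree with $q$ on the \emph{entire} restricted class of instances under consideration, not merely on the particular $I$ at hand. Beyond this there is essentially no obstacle: all the real work has already been carried out in Lemma~\ref{lem:tree-shaped} and in the classical tractability of $\alpha$-acyclic CQ evaluation, so the proposition follows as an immediate corollary.
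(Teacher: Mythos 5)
Your proposal is correct and follows essentially the same route as the paper: normalize $q$ to a tree-shaped (hence $\alpha$-acyclic) query $q'$ via Lemma~\ref{lem:tree-shaped}, which agrees with $q$ on all tree-shaped instances, and then evaluate $q'$ with Yannakakis' polynomial-time algorithm. The paper states this as an immediate consequence without spelling out the case distinction on part~(1) of the lemma, but your more explicit write-up is the same argument.
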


In what follows, we will therefore only work with tree-shaped instances.

\subsection{A reduction from 3CNF satisfiability}
\label{sec:reduction1}

Fix a schema $\mathbf{S}$ containing a binary relation symbol $R$
and a unary relation symbol $P$. 

We  use  a reduction from the satisfiability problem for 3CNF formulas, inspired by~\cite{DBLP:conf/ecml/Kietz93,Haussler89}.
Let $\phi=\phi_1\land\cdots\land \phi_k$ be any 3CNF formula over a propositional
signature $\mathit{PROP}=\{X_1, \ldots, X_m\}$.  We denote by $\mathit{LIT}=\{X_i, \overline{X_i}\mid i\leq m\}$ the set of all literals over $\mathit{PROP}$. For every  $l \in \mathit{LIT}$, set 
    $j_l = 2i$ if $l$ is of the form $X_i$ and $j_l=2i-1$ if $l$ is of the form $\overline{X_i}$.
Define an $\mathbf{S}$-instance $I_\phi$ as follows:
\begin{itemize}
    \item $R(a_i,p_{i,1})$ and $R(a_i,n_{i,1})$ for $i\leq m$
    \item $R(p_{i,j}, p_{i,j+1})$ and $R(n_{i,j}, n_{i,j+1})$ for $i\leq m$, $j< 2m$
    \item $P(p_{i,j_l})$ for every literal $l\in \mathit{LIT}\setminus\{\overline{X_i}\}$
\item  $P(n_{i,j_l})$  for every literal $l\in \mathit{LIT}\setminus\{X_i\}$
    \item $R(b,b_{i,1})$ for $i\leq k$
    \item $R(b_{i,j},b_{i,j+1})$ for $i\leq k$ and $b\leq 2m$
    \item $P(b_{i,j_l})$ for every $l\in \mathit{LIT}$ and $i\leq k$ with $l$ not occurring in the clause $\phi_i$.
\end{itemize}

    Let $E_\phi=\{((I_\phi,a_i),+)\mid i\leq m\}\cup \{((I_\phi,b),-)\}$.

\begin{example}
    Let $\mathit{PROP}=\{X_1, X_2\}$ and consider the formula $\phi = X_1 \land X_2 \land (\overline{X_1}\lor X_2)$. Then, the corresponding $\mathbf{S}$-instance $I_\phi$ can be depicted as follows (where each edge represents an $R$-edge directed downwards):
\begin{center}
\fbox{
\Tree [.{$a_1$} 
  [.{$p_{1,1}$} [.{$p_{1,2} \, P\!\!\!$}  [.{$p_{1,3} \, P\!\!\!$} [.{$p_{1,4} \, P\!\!\!$} ] ] ] ]  
  [.{$n_{1,1} \, P\!\!\!$} [.{$n_{1,2}$}  [.{$n_{1,3} \, P\!\!\!$} [.{$n_{1,4} \, P\!\!\!$} ] ] ] ]  
]
~~~~
\Tree [.{$a_2$} 
  [.{$p_{2,1} \, P\!\!\!$} [.{$p_{2,2} \, P\!\!\!$}  [.{$p_{2,3}$} [.{$p_{2,4} \, P\!\!\!$} ] ] ] ]  
  [.{$n_{2,1} \, P\!\!\!$} [.{$n_{2,2} \, P\!\!\!$}  [.{$n_{2,3} \, P\!\!\!$} [.{$n_{2,4}$} ] ] ] ]  
]
~~~~
\Tree [.{$b$} 
  [.{$b_{1,1} \, P\!\!\!$} [.{$b_{1,2}$}  [.{$b_{1,3} \, P\!\!\!$} [.{$b_{1,4} \, P\!\!\!$} ] ] ] ]  
  [.{$b_{2,1} \, P\!\!\!$} [.{$b_{2,2} \, P\!\!\!$}  [.{$b_{2,3} \, P\!\!\!$} [.{$b_{2,4}$} ] ] ] ]  
  [.{$b_{3,1}$} [.{$b_{3,2} \, P\!\!\!$}  [.{$b_{3,3} \, P\!\!\!$} [.{$b_{3,4}$} ] ] ] ]  
]
~~
}
\end{center}
\end{example}

\begin{lemma}\label{lem:reduction}\ 
    For all 3CNF formulas $\phi$:
\begin{enumerate}
\item 
From a satisfying assignment for $\phi$, one can 
construct in polynomial time a path-CQ that fits $E_\phi$. 
\item 
Conversely, if there is a CQ that fits $E_\phi$, then $\phi$ has
a satisfying assignment.
\end{enumerate}
In particular, whenever there is a CQ that fits $E_\phi$, then there is a fitting path-CQ
of size polynomial in $|\mathit{PROP}|$.
\end{lemma}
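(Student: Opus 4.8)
The plan is to handle the two items separately; item~(1) is a direct construction, and item~(2) is where the work lies, routed through Lemma~\ref{lem:tree-shaped}. For item~(1), given a satisfying assignment $\sigma$, I would take the literal set $L=\{\,l\in\mathit{LIT}\mid\sigma\models l\,\}$ (one literal per variable) and output the path-CQ
\[
  q_\sigma(x_0)\coloneq\exists x_1\dots x_{2m}\Bigl(\textstyle\bigwedge_{0\le t<2m}R(x_t,x_{t+1})\;\wedge\;\bigwedge_{l\in L}P(x_{j_l})\Bigr),
\]
which has size $O(m)$ and is plainly polynomial-time computable. To check it fits $E_\phi$: for a positive example $(I_\phi,a_i)$, send $x_0\mapsto a_i$ and run the path down the $p_{i,\bullet}$-chain when $\sigma(X_i)$ is true and down the $n_{i,\bullet}$-chain otherwise; the single $P$-free position on the $p_{i,\bullet}$-chain is $j_{\overline{X_i}}$, and $\overline{X_i}\notin L$ precisely when $\sigma(X_i)$ is true (dually for $n_{i,\bullet}$), so all $P$-atoms land on $P$-labeled elements. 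For the negative example, a homomorphism fixing $x_0\mapsto b$ would have to run the path down some clause-chain $b_{e,\bullet}$, whose $P$-free positions are $\{j_l\mid l\in\phi_e\}$; meeting all $P$-atoms would force $L\cap\phi_e=\emptyset$, contradicting $\sigma\models\phi_e$.

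For item~(2), suppose a CQ $q$ fits $E_\phi$; the first step is to normalize it. Passing to the connected component of the free variable is harmless, because the positive examples already force every other (Boolean) component to map into $I_\phi$, so those components cannot be the reason $b$ is rejected. Since $a_1\in q(I_\phi)$ and $I_\phi$ is tree-shaped, Lemma~\ref{lem:tree-shaped} then yields a tree-shaped CQ $q'$ agreeing with $q$ on every tree-shaped instance, hence in particular connected and still fitting $E_\phi$. As $a_i$ has no incoming $R$-edge in $I_\phi$ while $q'$ maps to it sending the free variable $x_1$ there, $x_1$ has no incoming $R$-edge in $q'$ either; being connected and tree-shaped, $q'$ is therefore a directed tree rooted at $x_1$ with all edges pointing away, of depth at most $2m$ (forced by the positive examples), with $x_1$ not $P$-labeled.

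The second step is to read a satisfying assignment off $q'$. Let $T_1,\dots,T_r$ be the subtrees hanging off the children of $x_1$, let $D_j\subseteq\{1,\dots,2m\}$ collect the depths (from $x_1$) at which $T_j$ carries a $P$-atom, and set $L_j=\{\,l\in\mathit{LIT}\mid j_l\in D_j\,\}$. The key observation is that in any homomorphism $q'\to I_\phi$, once the image of $x_1$ is fixed each $T_j$ must collapse onto the straight chain below the chosen image of its root, because every node of $I_\phi$ other than $a_1,\dots,a_m,b$ has out-degree at most $1$, and the $T_j$'s are routed independently. Reading the $P$-pattern of $I_\phi$, this gives: $T_j$ embeds below $p_{i,1}$ iff $\overline{X_i}\notin L_j$, below $n_{i,1}$ iff $X_i\notin L_j$, and below $b_{e,1}$ iff $L_j\cap\phi_e=\emptyset$. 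The positive examples say that for each $i$ some homomorphism sends $x_1\mapsto a_i$, hence for all $i,j$ at least one of $X_i,\overline{X_i}$ is absent from $L_j$, i.e.\ every $L_j$ is consistent; the negative example says no homomorphism sends $x_1\mapsto b$, so not every $T_j$ can be routed into a clause-chain, giving a $j^*$ with $L_{j^*}\cap\phi_e\neq\emptyset$ for all clauses. The assignment setting $X_i$ true if $X_i\in L_{j^*}$, false if $\overline{X_i}\in L_{j^*}$, and arbitrary otherwise, is then well-defined and satisfies $\phi$. The closing ``in particular'' is immediate: a fitting CQ yields a satisfying assignment by~(2), which yields a fitting path-CQ of size $O(m)$ by~(1).

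I expect the main obstacle to be exactly the normalization step: a CQ fitting $E_\phi$ need not be a path, need not be acyclic, and need not even be tree-shaped, so literals cannot be extracted directly. The escape is that all examples live on the \emph{single} tree-shaped instance $I_\phi$, so Lemma~\ref{lem:tree-shaped} applies and lets one replace $q$ by a tree-shaped $q'$ with identical behavior on $I_\phi$; once that is done, the rigidity of $I_\phi$ below level~$1$ forces each branch of $q'$ to behave like a $P$-decorated path, which is what makes the extraction of a consistent, clause-hitting literal set go through.
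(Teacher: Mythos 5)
Your proposal is correct and follows essentially the same route as the paper's proof: item~(1) is the same explicit path-CQ built from the satisfied literals, and item~(2) normalizes via Lemma~\ref{lem:tree-shaped} to a connected tree-shaped CQ rooted at the free variable, reads off a literal set from the depths of the $P$-atoms, and derives consistency from the positive examples and clause-hitting from the negative one. The only (immaterial) presentational difference is that you track the literal sets $L_j$ of all level-$1$ subtrees and extract the witnessing subtree $j^*$ from the failure at $b$, whereas the paper first fixes the one subtree that cannot map into any clause chain and argues about it alone.
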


\begin{proof}
1. Let $v$ be a satisfying assignment for $\phi$. Let
\[\begin{array}{l}
    q(x_0) \coloneq \exists x_1, \ldots x_{2m} (R(x_0, x_1) \land\cdots\land R(x_{2m-1},x_{2m}) \land {} \\ \hspace{40mm} \bigwedge_{l\in \mathit{LIT} \text{ such that } v\models l} P(x_{j_l})).
\end{array}\]
Clearly, each $a_i\in q(I_\phi)$ and $b\not\in q(I_\phi)$.

2. Let $q(x)$ be a unary CQ that fits $E_\phi$. 
By Lemma~\ref{lem:tree-shaped}, we may assume
that $q$ is a tree-shaped CQ. Furthermore,
we may assume without loss of generality that
$q$ is connected.
Let $\mathit{level}_q \colon \mathit{Vars}(q)\to\mathbb{N}$ be as given by 
Definition~\ref{def:tree-shaped}. We may 
assume $\mathit{level}_q(x)=0$ (if there was any $y\in \mathit{Vars}(q)$
with $\mathit{level}_q(y)<\mathit{level}_q(x)$, then $q$ would not fit
the positive examples of $E_\phi$).

Thus, $q(x)$ is a connected tree-shaped CQ, where $x$ is 
the root of the tree. Since $q(x)$ fits the negative
example $(I_\phi,b)$,
we have that $b\not\in q(I_\phi)$. This means
that either (i) $q$ contains a conjunct of the 
form $P(x)$, or (ii) for some $y\in \mathit{Vars}(q)$ with 
$\mathit{level}_q(y)=1$, the subtree of $q$ rooted at
$y$, does not admit a homomorphism to $(I_\phi,b_{i,1})$ for any $i\leq n$. It is easy
to see that (i) cannot happen, because it would
imply that $q$ does not fit the positive
examples in $E_\phi$. Therefore, case (ii) must
apply. Let $y$ be the variable in question, and 
let us denote by $q'(y)$ the subtree of $q$ rooted
at $y$ (with $y$ as its free variable). 

We know that $q'(I_\phi)$ does not contain 
$b_{i,1}$ for any $i\leq n$. Furthermore,
it is easy to see (from the fact that $q$ fits the positive examples in $E_\phi$), that
for each $i\leq m$, either $p_{i,1}$ or $n_{i,1}$ belongs
to $q'(I_\phi)$.

Now, 
let $L_y$ be the set $$\{l\in \mathit{LIT} \mid \text{$q'$ has a conjunct $P(z)$ with $\mathit{level}_q(z)=j_l+1$}\}$$ 

\par\noindent\textbf{Claim 1:} $L_y$ does not contain both
$X_i, \overline{X_i}$ for any $i\leq m$.

\medskip

Claim 1 follows immediately from the fact that $q(x)$ fits the 
positive examples.

\medskip

\par\noindent\textbf{Claim 2:} $L_y$ contains a
literal from each clause of $\phi$.

\medskip

Suppose, for the sake of a contradiction, that $\phi$ has a
clause $\phi_i$, such that no literal occurring in $\phi_i$
belongs to $L_y$. Then, $b_{i,1}$ belongs to $q'(I_{\phi})$,
as witnessed by the variable assignment that maps 
each variable $z$ to $b_{i,\mathit{level}_{q}(z)-1}$. However, 
we know that $b_{i,1}\not\in q'(I_{\phi})$, a contradiction.

Claim 1 and 2 together imply that $\phi$ is 
satisfiable. Indeed, it suffices to take any truth assignment consistent with the literals in $L_y$.
\end{proof}

From Lemma~\ref{lem:reduction}, together with the NP-hardness of 3CNF satisfiability, we immediately get:

\begin{theorem}\label{thm:fitting-is-hard}
Fix any schema $\mathbf{S}$ that contains at least a binary relation symbol and a unary relation symbol,
and let $C$ be any class of unary CQs over $\mathbf{S}$ that includes all path-CQs. Then the fitting problem for 
$C$ is NP-hard.
\end{theorem}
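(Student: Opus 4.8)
The plan is to read Theorem~\ref{thm:fitting-is-hard} as an immediate corollary of Lemma~\ref{lem:reduction} together with the NP-hardness of 3CNF satisfiability, since the hard work has already been packaged into the lemma. Concretely, I would argue as follows. Fix a schema $\mathbf{S}$ with a binary relation symbol $R$ and a unary relation symbol $P$ (these are guaranteed to exist by hypothesis; any additional symbols in $\mathbf{S}$ are irrelevant), and fix a class $C$ of unary CQs over $\mathbf{S}$ that includes all path-CQs. Given a 3CNF formula $\phi$ over a propositional signature $\mathit{PROP}$, construct the instance $I_\phi$ and the labeled example set $E_\phi$ exactly as in Section~\ref{sec:reduction1}; this construction is clearly computable in time polynomial in $|\phi|$, since $I_\phi$ has $O(|\mathit{PROP}|\cdot(|\mathit{PROP}|+k))$ elements and $O(|\mathit{PROP}|\cdot(|\mathit{PROP}|+k))$ facts, and $E_\phi$ has $|\mathit{PROP}|+1$ examples.

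Next I would verify the reduction is correct, i.e.\ that $\phi$ is satisfiable if and only if some CQ in $C$ fits $E_\phi$. For the forward direction, suppose $v$ satisfies $\phi$; by Lemma~\ref{lem:reduction}(1) there is a path-CQ $q$ that fits $E_\phi$, and since $C$ contains all path-CQs we have $q\in C$. For the converse, suppose some $q\in C$ fits $E_\phi$; then in particular \emph{some} CQ (over $\mathbf{S}$) fits $E_\phi$, so Lemma~\ref{lem:reduction}(2) yields a satisfying assignment for $\phi$. This establishes the equivalence, and hence a polynomial-time many-one reduction from 3CNF-SAT to the fitting problem for $C$, which is therefore NP-hard.

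The only subtlety worth flagging --- and the place I would be slightly careful --- is the direction of the "only if" step: we must make sure that the fitting problem for $C$ is genuinely about concepts drawn from $C$, yet Lemma~\ref{lem:reduction}(2) only guarantees a satisfying assignment from the existence of \emph{some} CQ over $\mathbf{S}$ fitting $E_\phi$. This causes no problem, since $C$ consists of CQs over $\mathbf{S}$, so any $q\in C$ fitting $E_\phi$ is in particular such a CQ; the hypothesis of Lemma~\ref{lem:reduction}(2) is met. No genuine obstacle remains: all the combinatorial content (the net-length/divisibility argument and the literal-set argument involving $L_y$) lives inside Lemma~\ref{lem:reduction}, and here we are merely assembling it with a standard NP-hardness reduction. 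I would conclude by noting that the final sentence of Lemma~\ref{lem:reduction} (fitting implies a fitting path-CQ of polynomial size) is not needed for this theorem but will be relevant later when circumventing the failure of the polynomial-size fitting property.
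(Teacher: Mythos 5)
Your proposal is correct and is exactly the paper's argument: the paper states Theorem~\ref{thm:fitting-is-hard} as an immediate consequence of Lemma~\ref{lem:reduction} together with NP-hardness of 3CNF satisfiability, and you have simply spelled out the (routine) assembly, including the correct observation that the two directions of the lemma handle membership in $C$ asymmetrically but harmlessly. No gaps.
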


Now, putting everything together, we can prove
Theorem~\ref{thm:main}, restated here:

\begin{theorem}\label{thm:main1} (assuming $\text{RP}\neq \text{NP}$)
Fix a schema $\mathbf{S}$ containing at least one binary 
relation symbol $R$ and one unary relation symbol $P$. Let
$C$ be any class of unary CQs over $\mathbf{S}$ that includes
all path-CQs. 
Then $C$ is not 
efficiently PAC learnable, even w.r.t. single-instance distributions. 
\end{theorem}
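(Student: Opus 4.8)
The plan is to derive a contradiction from the assumption that $C$ is efficiently PAC learnable, using the reduction from 3CNF satisfiability established in Lemma~\ref{lem:reduction}, while carefully avoiding the two obstacles already flagged: the failure of the polynomial-size fitting property and the failure of polynomial-time evaluability for $C$. Suppose $(A,f)$ is an efficient PAC algorithm for $C$. Given a 3CNF formula $\phi$ over $\mathit{PROP}=\{X_1,\dots,X_m\}$, build the instance $I_\phi$ and the labeled example set $E_\phi$ as in Section~\ref{sec:reduction1}. Note that $I_\phi$ is a \emph{tree-shaped} instance (each $R$-edge goes one level down, and no two $R$-edges share their second value), so all examples in $E_\phi$ lie in $\mathit{Ex}_{\mathit{tree}}$, and by Proposition~\ref{prop:ptime-evaluability-trees} we can evaluate \emph{any} CQ output by $A$ on these examples in polynomial time. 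Let $D$ be the uniform distribution on the $m+1$ examples in $E_\phi$; this is a single-instance distribution supported on $I_\phi$. Draw $f(1/\delta, 1/\epsilon, n, M)$ samples from $D$ with $\delta < 1/2$, $\epsilon < 1/(m+1)$, where $M$ is the (polynomial) size of the largest example in $E_\phi$, and crucially where $n$ is the size bound $p(|\mathit{PROP}|)$ on the fitting path-CQ guaranteed by the last sentence of Lemma~\ref{lem:reduction} — this is the step that sidesteps the missing polynomial-size fitting property, since we do not need \emph{every} fitting CQ to be small, only that \emph{some} fitting path-CQ is small when one exists. Run $A$ on the drawn sample and then check, in polynomial time via Proposition~\ref{prop:ptime-evaluability-trees}, whether the output CQ $q$ fits all of $E_\phi$; answer ``satisfiable'' iff it does.

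The correctness analysis splits on whether $\phi$ is satisfiable. If $\phi$ is unsatisfiable, then by part~2 of Lemma~\ref{lem:reduction} \emph{no} CQ fits $E_\phi$, so in particular $A$'s output never fits, and the algorithm always answers ``unsatisfiable''. If $\phi$ is satisfiable, then by part~1 of Lemma~\ref{lem:reduction} there is a fitting path-CQ $q^* \in C$ of size at most $n = p(|\mathit{PROP}|)$; since $q^*$ has zero error with respect to $D$, the PAC guarantee for target $q^*$ applies, and with probability at least $1-\delta$ the output $q$ has $\error_{q^*,D}(q) \le \epsilon < 1/(m+1)$. But $D$ puts mass $1/(m+1)$ on every example in $E_\phi$, so error below $1/(m+1)$ forces error exactly $0$, i.e. $q$ agrees with $q^*$ (hence fits $E_\phi$) on all of $E_\phi$. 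So the algorithm answers ``satisfiable'' with probability at least $1-\delta > 1/2$. This is a randomized polynomial-time algorithm with one-sided error for 3CNF-SAT, witnessing $\mathrm{NP} \subseteq \mathrm{RP}$, i.e. $\mathrm{RP}=\mathrm{NP}$, contradicting the hypothesis. By Proposition~\ref{prop:subspace}, restricting to the single-instance distribution supported on $I_\phi$ is harmless, so the result holds even for single-instance example distributions.

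The main obstacle — and the only place the argument differs from the textbook Pitt–Valiant reduction of Proposition~\ref{prop:pac-vs-fitting} — is getting a legitimate polynomial bound $n$ to feed as the target-size parameter to $f$. We cannot take $n$ to be a polynomial in $\|E_\phi\|$ because $C$ lacks the polynomial-size fitting property (Theorem~\ref{thm:nopolyfit}); instead we exploit the sharper ``in particular'' clause of Lemma~\ref{lem:reduction}, which says the specific family of example sets $E_\phi$ always admits a \emph{small} fitting path-CQ whenever it admits any fitting CQ at all. The secondary subtlety is that $A$ may output a CQ that is not a path-CQ and whose evaluation is NP-hard in general; this is handled by the observation that $E_\phi$ consists of tree-shaped instances together with Proposition~\ref{prop:ptime-evaluability-trees}, which makes the verification step (and the RP-machine's overall running time) polynomial. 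Everything else — choosing $\delta,\epsilon$, invoking the sample-size polynomial $f$, the ``error $< 1/(m+1)$ implies error $0$'' step — is routine and parallels the proof of Proposition~\ref{prop:pac-vs-fitting}.
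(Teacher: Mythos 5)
Your proposal is correct and follows essentially the same route as the paper: the paper restricts the example space to the tree-shaped instances $I_\phi$ (via Proposition~\ref{prop:subspace}), observes that the restricted concept class regains both polynomial-time evaluability (Proposition~\ref{prop:ptime-evaluability-trees}) and the polynomial-size fitting property (the ``in particular'' clause of Lemma~\ref{lem:reduction}), and then invokes Proposition~\ref{prop:pac-vs-fitting} and Theorem~\ref{thm:fitting-is-hard} as black boxes. You simply inline the Pitt--Valiant argument as an explicit RP algorithm for 3CNF-SAT, with the same two key fixes in the same places.
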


\begin{proof}
    Assume that the concept class $C=(C,\mathit{Ex},\models)$ is 
efficiently PAC learnable. Then, by Proposition~\ref{prop:subspace}, so is
$C'=(C,\mathit{Ex}',\models)$ where $\mathit{Ex}'=\{(I,a)\mid I=I_\phi 
    \text{ for some 3CNF formula } \phi \text{ and }
    a \in \{a_1,a_2,b\}\}$. 
It follows from Lemma~\ref{lem:reduction} that $C'$ has
the polynomial-size fitting property. 
Furthermore,  $C'$
is polynomial-time evaluable  since all examples in $\mathit{Ex}'$ are tree-shaped and by Proposition~\ref{prop:ptime-evaluability-trees}. By Proposition~\ref{prop:pac-vs-fitting}, the fitting problem for $C'$ is thus solvable in RP. 
By Theorem~\ref{thm:fitting-is-hard}, this implies that
$\text{RP} = \text{NP}$.

A careful inspection of the proof of Proposition~\ref{prop:pac-vs-fitting}
and the construction of our examples reveals that even efficient PAC learnability w.r.t.~single-instance distributions already
gives us, in the same way as above, an RP-algorithm for the fitting problem for $C'$.
\end{proof}

\begin{remark}\rm
The above proof involves path-CQs
of unbounded depth, over a fixed schema. It is easy to see that if we were to bound both the 
depth of the path-CQs and keep the schema fixed,  we would end up with a finite concept class, trivializing the PAC learning problem.
\end{remark}

\begin{remark}\rm
The above non-learnability proof cannot be adapted to
 UCQs in an obvious way. In fact, we crucially use the fact that
the fitting problem for path-CQs is NP-hard whereas the 
fitting problem for UCQs that are unions of path-CQs can be solved in
polynomial time. On the other hand, as mentioned earlier, 
it follows from results in \cite{CateDK13:learning} that UCQs are not efficiently PAC learnable, assuming $\text{RP} \neq \text{NP}$.
\end{remark}

\begin{remark}\rm
The fact that the above proof involves a reduction from the satisfiability problem for 3CNF formulas is remarkable, 
given that 3CNF formulas themselves \emph{are}  efficiently PAC learnable~\cite{KearnsVazirani}.
\end{remark}

\begin{remark}\rm
Efficient PAC learnability as defined in
Definition~\ref{def:pac} (in the non-strong version) is sometimes also known as 
\emph{strong PAC learnability}. In contrast, 
\emph{weak PAC learnability} then merely requires the
existence of a learner that works for 
\emph{some} non-trivial choice of $\delta$ and $\epsilon$.
A well-known result in computational learning theory
states that, for polynomial-time evaluable concept
classes, weak learnability implies strong learnability (cf.~\cite{KearnsVazirani}).
Since the concept class of CQs is not polynomial-time
evaluable, Theorem~\ref{thm:main}, taken at face value,
does not imply that the same result holds in the weak PAC model.
Nevertheless, inspection of our proof immediately shows that it yields the same result also for the weak
PAC model. 
\end{remark}

\section{PAC Learnability with Membership Queries}

We prove Theorem~\ref{thm:main2} from the introduction.
Formally, a \emph{membership oracle} $\mathit{MEMB}_{\phi}$, for a 
concept~$\phi$, is an oracle that, given any unlabeled
example $e$, returns (in unit time) its label according to $\phi$.
PAC learning with access to a membership oracle for the target
concept can be
viewed as a formal model of active learning.

\begin{theorem}\label{thm:occam-membership}
Fix any schema $\mathbf{S}$ and $k\geq 0$. There is an algorithm that takes
as input a set $E$ of examples labeled according to a $k$-ary CQ $q^*$ over $\mathbf{S}$, has access to a membership oracle for $q^*$,
and outputs a $k$-ary CQ $q$ over $\mathbf{S}$ with 
$|q| \leq |q^*|$ that fits $E$. Moreover, the running time of the algorithm 
is polynomial in $||E||$ and~$|q^*|$.
\end{theorem}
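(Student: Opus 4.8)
**

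The plan is to learn the target CQ $q^*$ "atom by atom" using the membership oracle, exploiting the fact that a CQ is essentially (the homomorphism type of) its canonical instance, and that homomorphisms compose. The starting observation is that the canonical instance $I_{q^*}$ of $q^*$ itself, with its distinguished tuple, is a positive example for $q^*$ (since $q^*$ maps identically into $I_{q^*}$), and $q^*$ is in fact the \emph{most specific} CQ for which $(I_{q^*},\mathbf{x})$ is positive. So if we had $I_{q^*}$ in hand we would be done; the problem is that we are only given $E$. The idea, then, is to reconstruct from $E$ a single labeled instance that plays the role of $I_{q^*}$ well enough to recover a fitting CQ of size at most $|q^*|$.

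Concretely, I would proceed as follows. First, combine the positive examples in $E$: let $(I_1,\mathbf{a}_1),\dots,(I_p,\mathbf{a}_p)$ be the positive examples and form their \emph{direct (categorical) product} $(I^\times,\mathbf{a}^\times)$, where $I^\times = I_1\times\cdots\times I_p$ and $\mathbf{a}^\times$ is the componentwise tuple. Since $q^*$ maps homomorphically into each $I_i$ sending $\mathbf{x}$ to $\mathbf{a}_i$, it maps homomorphically into the product sending $\mathbf{x}$ to $\mathbf{a}^\times$; hence $(I^\times,\mathbf{a}^\times)$ is again positive for $q^*$, and moreover \emph{any} CQ that is positive on $(I^\times,\mathbf{a}^\times)$ is positive on all the original positive examples. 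Now the canonical CQ $q^\times$ of $(I^\times,\mathbf{a}^\times)$ fits all positive examples but is far too large, and need not fit the negative ones. The second step is to shrink $q^\times$ down to something of size $\le |q^*|$ while preserving the positive fit and acquiring the negative fit, and this is exactly where the membership oracle enters: repeatedly attempt to delete an atom (equivalently, restrict the canonical instance) and ask the oracle whether the resulting smaller instance, viewed as an example, is still labeled positive by $q^*$; keep the deletion iff the answer is yes. Since $q^*$ homomorphically maps into $I^\times$, its image is a sub-instance on which $q^*$ is still satisfied, so the deletion process cannot get stuck before reaching something no larger than $I_{q^*}$; the greedy/minimality argument shows the final instance $I_{\min}$ has at most $|q^*|$ atoms (one has to be a little careful and possibly also allow identifying two elements, i.e., work up to homomorphic equivalence / take cores, so that the bound is genuinely $|q^*|$ rather than just "bounded"). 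The output is then $q := $ canonical CQ of $(I_{\min},\mathbf{a}^\times_{\min})$. Finally, verify it fits $E$: it is positive on all positive examples by construction (it maps into each $I_i$), and it is negative on every negative example $(I_j^-,\mathbf{b}_j)$ because $q$ is positive on $(I_{\min},\dots)$, which the oracle certified as positive for $q^*$, so $q$ is a CQ "below $q^*$" in the homomorphism order and therefore inherits all of $q^*$'s negative labels — in particular $\mathbf{b}_j \notin q(I_j^-)$ since $\mathbf{b}_j \notin q^*(I_j^-)$ would fail only if $q$ were strictly weaker than $q^*$ on that example, which cannot happen once $q \models q^*$...

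Let me restate that last point more carefully, as it is the crux: we need $q^* \models q$ (every instance-tuple satisfying $q^*$ also satisfies $q$), equivalently a homomorphism $q \to q^*$ (i.e., $q\to I_{q^*}$ sending frontier to frontier). This is guaranteed because $I_{\min}$ was obtained by deleting atoms from $I^\times$ only when the oracle said the result is still positive for $q^*$, hence $q^*$ maps into $I_{\min}$; composing with the inclusion $I_{\min}\hookrightarrow$ (nothing) — wait, we need the map the other way. Actually $q^*$ maps into $I_{\min}$ gives $I_{\min} \models q^*(\mathbf{a}_{\min})$, which says the canonical instance of $q$ satisfies $q^*$, which is precisely $q \models q^*$ (!), the direction we want: any example positive for $q$ has a homomorphism from $I_q = I_{\min}$, and $I_{\min}$ has a homomorphism from $I_{q^*}$, so composing, any example positive for $q$ is positive for $q^*$. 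Hence $q$'s positive region is contained in $q^*$'s, so $q$ agrees with $q^*$ on all negative examples of $E$, and we separately arranged $q$ to be positive on all positive examples of $E$. That completes the fit.

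The main obstacle I anticipate is the size bound $|q| \le |q^*|$ rather than merely "polynomial in $|q^*|$". Greedy atom-deletion on $I^\times$ naturally halts at a homomorphic image of $I_{q^*}$ only if we are also permitted to \emph{merge} elements, not just delete facts; a minimal such instance (in the induced-sub-instance-plus-quotient order) that still satisfies $q^*$ is a homomorphic image of $I_{q^*}$ and hence has at most $|q^*|$ atoms, but implementing the merge steps and proving the process terminates in polynomially many oracle calls (each step strictly decreases a simple measure like number-of-atoms-plus-number-of-elements, and there are at most $||E||$-many starting atoms in $I^\times$... careful, $I^\times$ has size $\prod |I_i|$, which is exponential!). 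So the product construction must be handled more delicately: either take the core of $I^\times$ first (but that is itself NP-hard to compute), or avoid the product entirely and instead build $I_{\min}$ by a different route — e.g., start from the single positive example and use the oracle to merge/delete, then intersect the answers across positive examples, or work with a cleverly bounded "partial product" restricted to the reachable part of small radius. Getting a genuinely polynomial-time algorithm whose output has size exactly bounded by $|q^*|$, while only ever manipulating polynomial-size instances, is the real content of the theorem; I expect the authors resolve it by a more careful incremental construction (maintaining a current hypothesis instance of size $\le |q^*|$ and, for each example in turn, using the oracle to repair it), rather than by the naive "take the product, then prune" sketch above.

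Once Theorem~\ref{thm:occam-membership} is established, Theorem~\ref{thm:main2} follows: the algorithm of Theorem~\ref{thm:occam-membership} is an Occam algorithm (with $\alpha=0$, $k=1$, since $|q|\le|q^*| = |E|^0\,|q^*|^1$ and it runs in polynomial time), so by Theorem~\ref{thm:occam} the class of $k$-ary CQs is efficiently PAC learnable with membership queries; the UCQ case is handled analogously, reconstructing each disjunct from an appropriate subset of the positive examples (a UCQ of size $s$ has at most $s$ disjuncts, each of size $\le s$, and one greedily covers the positive examples by oracle-certified disjuncts), again yielding an Occam algorithm.
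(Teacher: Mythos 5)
Your sketch has the right ingredients (products of positive examples, oracle-guided minimization, the homomorphism-composition argument for why the hypothesis inherits $q^*$'s negative labels), and you correctly diagnose the fatal flaw in your own construction: the product $I_1\times\cdots\times I_p$ of all positive examples has size $\prod_i|I_i|$, which is exponential in $\|E\|$. But you leave that flaw unrepaired, only conjecturing that ``a more careful incremental construction'' exists. That repair is the actual content of the proof, and it is simple once stated: interleave the two operations. Define a \emph{critical} positive example as one every proper subinstance of which is negative for $q^*$; one greedy pass of fact deletions with oracle calls produces such an example from any positive one in linear time. Then process the positive examples one at a time, maintaining a single critical positive example $(J_{i-1},\mathbf{b}_{i-1})$: form the \emph{pairwise} product $(J_{i-1},\mathbf{b}_{i-1})\times(I_i,\mathbf{a}_i)$ and immediately re-minimize it to a critical example $(J_i,\mathbf{b}_i)$. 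Since every critical positive example has at most $|q^*|$ facts, each intermediate product has size $O(|q^*|\cdot|I_i|)$, so every object the algorithm touches is polynomial and the total running time is $O(\|E\|^2\cdot|q^*|)$.

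Your second worry --- that fact deletion alone cannot reach the bound $|q|\le|q^*|$ and that one must also merge elements or take cores --- is unfounded, and the core computation you contemplate would indeed be problematic. The bound follows directly from criticality: since $(J,\mathbf{b})$ is positive for $q^*$, there is a homomorphism from the canonical instance of $q^*$ into $J$; its image is a subinstance of $J$ that is still positive, so by criticality the image is all of $J$, and the image has at most as many facts as $q^*$ has atoms. Hence $|J|\le|q^*|$ with no quotienting needed. The rest of your argument (the output $q$ is the canonical CQ of $(J,\mathbf{b})$; it fits the positives because $(J,\mathbf{b})$ maps homomorphically into each of them, and fits the negatives because $(J,\mathbf{b})$ is itself positive for $q^*$, so a homomorphism from $(J,\mathbf{b})$ to a negative example would compose to contradict $q^*$'s label) matches the paper, as does the derivation of Theorem~\ref{thm:main2} via the Occam bound with $\alpha=0$.
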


\begin{proof}
We use ideas similar to the ones used in the proof that CQs are efficiently exactly learnable with membership and equivalence queries~\cite{CateDK13:learning,tCD2022:conjunctive}.
Before we describe the algorithm, we introduce a number of basic concepts. 

Let $I,J$ be instances over the same schema. A mapping $h\colon \adom(I)\to \adom(J)$ is called \emph{homomorphism from $I$ to $J$} if $R(h(\textbf{c}))\in J$ for every $R(\textbf{c})\in I$. Given tuples $\textbf{a}$ and $\textbf{b}$ of values from $I$ and $J$, respectively, we write
$(I,\textbf{a})\to (J,\textbf{b})$ to denote the existence of a 
homomorphism $h$ from $I$ to $J$ with $h(\textbf{a})=\textbf{b}$. Homomorphisms compose in the sense that $(I,\textbf{a})\to (J,\textbf{b})$ and $(J,\textbf{b})\to (K,\textbf{c})$ implies $(I,\textbf{a})\to (K,\textbf{c})$.

The \emph{direct product} $I\times J$ of two instances (over the same
schema $\textbf{S}$), is the $\textbf{S}$-instance that consists of
all facts of the form $R(\langle a_1, b_1\rangle, \ldots, \langle a_n,b_n\rangle)$,
where $R(a_1, \ldots, a_n)$ is a fact of $I$ and 
$R(b_1, \ldots, b_n)$ is a fact of $J$. Note that the active 
domain of $I\times J$ consists of pairs from $\adom(I)\times \adom(J)$.
The direct product $(I,\textbf{a})\times (J,\textbf{b})$ of two examples, where $\textbf{a}=a_1, \ldots, a_k$ and $\textbf{b}=b_1, \ldots b_k$ are of the same length, is given by $(I\times J, (\langle a_1, b_1\rangle, \ldots, \langle a_k,b_k\rangle)$. Note that, in general, this may not yield a well-defined example, because there is no guarantee that the distinguished elements $\langle a_1, b_1\rangle, \ldots, \langle a_k,b_k\rangle$
belong to $\adom(I\times J)$. When it is well-defined, then the projections to the respective components witness that both $(I,\textbf{a})\times (J,\textbf{b}) \to (I,\textbf{a})$ and $(I,\textbf{a})\times (J,\textbf{b}) \to (J,\textbf{b})$.
 
A \emph{critical positive example} for a CQ $q^*$ is a 
positive example $(I,\textbf{a})$ for $q^*$, such that,
for every proper subinstance $I'\subsetneq I$, 
$(I',\textbf{a})$ is a negative example for $q^*$.

The following claim is easy to prove
(\cite[Lemma 5.4]{tCD2022:conjunctive}):

\medskip\par\noindent\textbf{Claim 1:}
Given a positive example $(I,\textbf{a})$ for an unknown CQ $q^*$,
we can construct from it in linear time a critical positive example
$(I',\textbf{a})$ for $q^*$, with $I'\subseteq I$, given access to a membership oracle for $q^*$.

\medskip\par\noindent\textbf{Claim 2:}
If $(I,\textbf{a})$ and $(J,\textbf{b})$ are positive examples for
a CQ $q^*$, then $(I,\textbf{a})\times (J,\textbf{b})$ is a well-defined example, and it is a positive example for $q^*$.

\emph{Proof of Claim 2:}
Let $(I,\textbf{a})$ and $(J,\textbf{b})$ be positive examples for a CQ $q^*$. Let
$h_1$ and $h_2$ be the respective witnessing variable assignments. Then the map $h$ given
by $h(x)=(h_1(x),h_2(x))$ is a satisfying variable
assignment for $q^*$ in $(I,\textbf{a})\times (J,\textbf{b})$, showing that the latter is a positive
example for $q^*$. It remains to show that it is a 
well-defined example, i.e., that each distinguished element occurs in a fact. This follows 
from the fact that each free variable of $q^*$ occurs
in a conjunct of $q^*$ (by the definition of CQs),
and that each distinguished element of $(I,\textbf{a})\times (J,\textbf{b})$ is the $h$-image of a free variable of $q^*$ (cf.~\cite[Lemma 5.5]{tCD2022:conjunctive}).
\medskip

Given a set $E$ of examples labeled according to~$q^*$, the algorithm proceeds as follows. Let $(I_1, \textbf{a}_1), \ldots, (I_n,\textbf{a}_n)$ be an enumeration of the positive examples in $E$.
We construct, by induction on $n$, a critical positive example
$(J,\textbf{b})$ for $q^*$ such that there is a homomorphism from
$(J,\textbf{b})$ to each $(I_i, \textbf{a}_i)$.
This is done by applying Claim~1 and Claim~2 in an interleaved 
fashion. More precisely: 
\begin{itemize}

    \item Start by setting $(J_1,\textbf{b}_1)$ to be the critical positive example obtained from $(I_1,\textbf{a}_1)$ via Claim~1.
    
    \item For $i=2,\ldots,n$, let $(J_i',\textbf{b}_i')$ be  $(J_{i-1},\textbf{b}_{i-1})\times (I_{i},\textbf{a}_i)$ and obtain $(J_i,\textbf{b}_i)$ as critical positive example from $(J'_i,\textbf{b}'_i)$ via Claim~1.
    \item Set $(J,\textbf{b})=(J_n,\textbf{b}_n)$.
\end{itemize}
Note that, by Claim~2 and the fact that homomorphisms compose, each $(J_i',\textbf{b}_i')$ is a well-defined example that has a homomorphism to all examples $(I_1,\textbf{a}_1),\ldots,(I_i,\textbf{a}_i)$. Thus, $(J,\textbf{b})$ has a homomorphism to all positive examples. 
Let $\textbf{b} = b_1,\ldots,b_k$ and let $q$ be the canonical CQ of $(J,\textbf{b})$, that is,
the CQ $q(x_{b_1}, \ldots, x_{b_k})$ that has a conjunct
for every fact of $J$, where each element $b\in \adom(J)$ 
is replaced by a corresponding variable $x_{b}$. 
Then $q$ fits the positive examples in $E$
since $(J,\textbf{b})$ has a homomorphism to each positive
example. It also fits the negative examples in~$E$:
$(J,\textbf{b})$ is a positive example for $q^*$ by construction  and if $q$ fails to fit a negative example $(I,\textbf{a})$ in $E$, then
 $(J,\textbf{b})$ has a homomorphism to $(I,\textbf{a})$, which, by composition of homomorphisms,  leads to a contradiction with 
 $q^*$ fitting~$(I,\textbf{a})$.
 
Furthermore, one can easily see that any 
\emph{critical} positive example  $(I,\textbf{a})$ for $q^*$ satisfies $|I|\leq |q^*|$. Hence,
each $J_i$ satisfies $|J_i|\leq |q^*|$. This implies, in particular, that $|q|\leq |q^*|$ as required. Moreover, it implies that $|J_i'|\in O(||E||\cdot |q^*|)$, for all $i$. Since $J_i$ is obtained from $J_i'$ in linear time by Claim~1, the running time of this algorithm is $O(||E||^2\cdot |q^*|)$.
\end{proof}

The algorithm given in Theorem~\ref{thm:occam-membership} is 
an Occam algorithm (with $\alpha=0$ and $k=1$) in the sense of Definition~\ref{def:occam}, except for the 
fact that it uses a membership oracle. While Theorem~\ref{thm:occam} is stated for the case without
membership queries, its proof applies also to Occam algorithms with membership queries, yielding efficient PAC learnability with membership queries (stated as Theorem~\ref{thm:main2} in the introduction):

\begin{corollary}
Fix any schema $\mathbf{S}$ and $k\geq 0$. The class of all $k$-ary CQs over $\mathbf{S}$ is efficiently PAC learnable with membership queries.
\end{corollary}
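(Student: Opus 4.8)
The final statement to prove is the Corollary asserting that the class of all $k$-ary CQs over a fixed schema $\mathbf{S}$ is efficiently PAC learnable with membership queries. The plan is to reduce this immediately to two ingredients already developed in the excerpt: (i) Theorem~\ref{thm:occam-membership}, which provides an algorithm running in time polynomial in $||E||$ and $|q^*|$ that, given labeled examples and a membership oracle for the target $q^*$, outputs a fitting CQ $q$ with $|q| \le |q^*|$; and (ii) Theorem~\ref{thm:occam} (the Blumer \emph{et al.}\ Occam-razor theorem), whose proof, as noted in the paragraph preceding the corollary, goes through unchanged when the Occam algorithm additionally consults a membership oracle.

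Concretely, I would argue as follows. First observe that the algorithm of Theorem~\ref{thm:occam-membership} is an Occam algorithm for the concept class of $k$-ary CQs over $\mathbf{S}$ in the sense of Definition~\ref{def:occam}, with parameters $\alpha = 0$ and $k = 1$ (the output bound $|q| \le |q^*| = ||E||^0 \cdot |q^*|^1$ meets the requirement, and the running time is polynomial in $||E||$ and $|q^*|$ as required), the only deviation being the use of the membership oracle. Second, I would note that the sample-complexity analysis underlying Theorem~\ref{thm:occam}---which converts any Occam algorithm into an efficient PAC algorithm $(A,f)$ with $f$ as displayed just after Definition~\ref{def:occam}---depends only on the fact that $A$ produces a hypothesis that fits the sample and whose size obeys the $\alpha,k$ bound; it is insensitive to \emph{how} $A$ computes that hypothesis. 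Hence the same argument yields an efficient PAC algorithm with membership queries: draw $f(1/\delta,1/\epsilon,n,m)$ labeled examples, run the Occam-with-membership algorithm on them (answering its oracle calls by forwarding them to the membership oracle for $q^*$), and output the resulting CQ. The VC-dimension / union-bound estimate in \cite{Blumer89:learnability} then guarantees that with probability at least $1-\delta$ the output has error at most $\epsilon$.

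I do not anticipate a genuine obstacle here, since the corollary is essentially a bookkeeping step: all the real work is in Theorem~\ref{thm:occam-membership} (the product-and-minimize construction with the critical-example argument) and in the cited Occam theorem. The one point that deserves an explicit sentence is that the membership oracle is available precisely because the examples are labeled according to $q^*$, so oracle calls on self-constructed examples (the intermediate products $(J_i',\textbf{b}_i')$) are legitimate in the membership-query PAC model; this is exactly the setting of Theorem~\ref{thm:occam-membership} and is preserved by the Occam-to-PAC transformation. Therefore the statement follows, and the same reasoning, together with the straightforward extension of Theorem~\ref{thm:occam-membership} to UCQs, also gives the UCQ case of Theorem~\ref{thm:main2}.
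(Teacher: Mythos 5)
Your proposal is correct and matches the paper's own argument essentially verbatim: the paper likewise observes that the algorithm of Theorem~\ref{thm:occam-membership} is an Occam algorithm with $\alpha=0$ and $k=1$ except for its use of the membership oracle, and that the proof of Theorem~\ref{thm:occam} carries over unchanged to Occam algorithms with membership queries. Your additional remark about the legitimacy of oracle calls on self-constructed product examples is a reasonable clarification but not a point the paper needed to belabor.
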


\begin{remark}\rm\label{rem:prediction-vs-learning}
The proof of Theorem~\ref{thm:occam-membership} establishes something stronger, namely that CQs are
efficiently PAC learnable with membership queries even when the schema 
$\mathbf{S}$ and the arity $k$ are not fixed but treated as part of the input of the learning task. This is remarkable, because it follows
from results in~\cite{Dalmau1999} that CQs are not PAC predictable with membership queries when the arity is treated
as part of the input (under suitable cryptographic assumptions). However, note that efficient PAC learnability (with membership queries) implies PAC predictability (with membership queries) only for concept classes that are polynomial-time evaluable, which the class of CQs is not.
\end{remark}

\begin{remark}\rm
We expect that, with respect to each of the various notions of ``acyclicity'' mentioned in the introduction, acyclic CQs are efficiently PAC learnable with membership queries. 
However, since efficient PAC learnability (with or without membership queries) is not a monotone property of concept classes, this requires a case-by-case analysis. A challenge
is posed by the fact that the positive examples $(I_i,\mathbf{a}_i)$ are not guaranteed to correspond to queries from the considered class, and thus neither are the hypotheses that our algorithm generates.
\end{remark}

The above proof can also be modified to apply to the concept class of \emph{unions of conjunctive queries (UCQs)}. By a $k$-ary UCQ over a schema $\textbf{S}$ we mean a non-empty finite disjunction of $k$-ary CQs over $\textbf{S}$.

\begin{theorem}\label{thm:ucq}
Fix any schema $\mathbf{S}$ and $k\geq 0$. The class of $k$-ary UCQs over $\mathbf{S}$ is efficiently PAC learnable with membership queries.
\end{theorem}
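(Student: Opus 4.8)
The plan is to run the membership-query Occam algorithm from the proof of Theorem~\ref{thm:occam-membership}, but maintaining a \emph{list} of candidate CQs instead of a single one --- morally, one per disjunct of the target UCQ $q^* = q_1 \vee \cdots \vee q_s$. The obstacle is exactly that Claim~2 of that proof fails here: the direct product of two positive examples for $q^*$ need not be positive, since they may satisfy different disjuncts. What survives is the \emph{disjunct-wise} version of Claim~2: if $(I,\mathbf{a})$ and $(J,\mathbf{b})$ are both positive for a single CQ $q_j$, then $(I,\mathbf{a}) \times (J,\mathbf{b})$ is a well-defined example that is positive for $q_j$, hence for $q^*$ (the argument is verbatim that of Claim~2, applied to $q_j$).

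Concretely, I would proceed as follows. Using the membership oracle and Claim~1, first replace every positive example in $E$ by a critical positive example for $q^*$; recall that such an example $(I,\mathbf{a})$ satisfies $|I| \le |q^*|$, since some disjunct $q_j$ maps into $I$, its image is a positive subinstance, and by criticality the image is all of $I$. Let $e_1, \ldots, e_n$ be the resulting critical positive examples (as in the proof of Theorem~\ref{thm:occam-membership}, we may assume $n \ge 1$; the degenerate all-negative case is routine). Initialize the candidate list with $C_1 := e_1$. For $i = 2, \ldots, n$: if there is a candidate $C_l$ with $C_l \times e_i$ well-defined and a positive example for $q^*$ (the latter tested via the membership oracle), replace $C_l$ by the critical positive example obtained from $C_l \times e_i$ via Claim~1; otherwise append $e_i$ as a new candidate. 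Finally, output the UCQ $q := h_1 \vee \cdots \vee h_t$, where $h_l$ is the canonical CQ of $C_l$ and $t$ is the number of candidates.

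Two things need to be checked. First, $q$ fits $E$: each $C_l$ is positive for $q^*$ by construction and, by composition of homomorphisms, maps into every $e_i$ merged into it, and every positive example is merged somewhere, so $q$ fits the positive examples; and if $h_l \to (I,\mathbf{a})$ for a negative example $(I,\mathbf{a}) \in E$, then $q^* \to C_l \to (I,\mathbf{a})$, contradicting that $(I,\mathbf{a})$ is negative --- this is the argument of Theorem~\ref{thm:occam-membership} carried out one disjunct at a time. Second --- and this is the step I expect to be the crux --- we must bound $t \le s \le |q^*|$. For this, attach to each candidate $C_l$ the nonempty set $D_l \subseteq \{1, \ldots, s\}$ of indices of disjuncts it satisfies, and observe: (i) a merge into $C_l$ can only shrink $D_l$ (and keeps it nonempty), because the critical subinstance of $C_l \times e_i$ maps into $C_l \times e_i$ and hence into $C_l$, so every disjunct it satisfies was already satisfied by $C_l$; and (ii) when $e_i$ fails to merge with any candidate, the set of disjuncts satisfied by $e_i$ is disjoint from every $D_l$, for otherwise the disjunct-wise Claim~2 would make $C_l \times e_i$ well-defined and positive for a common disjunct, so $e_i$ would have merged. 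Hence the $D_l$ stay nonempty and pairwise disjoint, and $t \le s \le |q^*|$ follows.

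Finally, this is an Occam algorithm with membership queries, with parameters $\alpha = 0$ and $k = 2$: each candidate is critical, so $|h_l| \le |q^*|$, whence $|q| \le t \cdot |q^*| \le |q^*|^2 = |E|^0 \cdot |q^*|^2$; and each of the at most $|E|$ positive examples triggers at most $|q^*|$ merge attempts, each computing a product of size $O(|q^*|^2)$, one membership query, and a linear-time application of Claim~1, so the running time is polynomial in $||E||$ and $|q^*|$. By the membership-query version of Theorem~\ref{thm:occam} (already used to obtain Theorem~\ref{thm:main2}), the class of $k$-ary UCQs over $\mathbf{S}$ is therefore efficiently PAC learnable with membership queries.
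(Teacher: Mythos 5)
Your proposal is correct and follows essentially the same route as the paper: maintain a set of critical positive examples, merge a new positive example into an existing candidate via direct product whenever a membership query certifies the product positive (equivalently, they share a disjunct), and output the disjunction of canonical CQs. The only differences are cosmetic --- you criticalize all positive examples up front rather than lazily, and you settle for the Occam bound $|q| \le |q^*|^2$, whereas your own disjoint-disjunct-sets argument already yields the paper's sharper bound $\sum_l |C_l| \le |q^*|$ (pick one disjunct per candidate; criticality bounds each $|C_l|$ by that disjunct's size).
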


\begin{proof} 
We sketch the modified algorithm.
Given a set $E$ of labeled examples, it  proceeds as follows. Let $(I_1, \textbf{a}_1), \ldots, (I_n,\textbf{a}_n)$ be an enumeration of the positive examples in $E$.
We construct 
sets of critical positive examples $X_0,\dots,X_n$ such that for all $i$ and all $(I_j, \textbf{a}_j)$ with $j < i\leq n$, there exists a $(J,\textbf{b})\in X_i$ that admits a homomorphism  to  $(I_j, \textbf{a}_j)$.
As before, this is done by applying Claim~1 and Claim~2 in an interleaved 
fashion. 

More precisely, set $X_0 = \emptyset$;
for $i=1, \ldots, n$,
        we first test whether there is a $(J,\textbf{b})\in X_{i-1}$ such that $(J,\textbf{b})\times (I_i,\textbf{a}_i)$ is a positive example
        for the target query $q^*$. We use a membership 
        query for this. If such $(J,\textbf{b})\in X_{i-1}$ exists, then we choose an arbitrary one
        and set $X_i = (X_{i-1}\setminus \{(J,\textbf{b}\})\cup \{(J',\textbf{b}')\}$,
        where $(J',\textbf{b}')$ is a subinstance of
        $(J,\textbf{b})\times (I_i,\textbf{a}_i)$ that is a 
        critical positive example for~$q^*$. 
        Otherwise (if no such $(J,\textbf{b})\in X_{i-1}$ exists), we set $X_i = X_{i-1}\cup \{ (J',\textbf{b}')\}$
        where $(J',\textbf{b}')$ is a subinstance of
        $(I_i,\textbf{a}_i)$ that is a 
        critical positive example for $q^*$.

Let $q$ be the UCQ that is the disjunction of the canonical CQs of the examples in $X_n$.
By similar arguments as before, we can show that $q$ fits $E$ and  $|q|\leq |q^*|$.
In particular,  for each $i\leq n$ the sum of the sizes of the structures in
$X_i$ is at most the size of $q^*$. 
\end{proof}
\begin{remark}\rm
The problem of learning \emph{GAV schema mappings}
closely corresponds to the problem of learning UCQs
(cf.~\cite{tCD2022:conjunctive}). In particular,
Theorem~\ref{thm:ucq} 
implies that GAV schema mappings are efficiently PAC learnable with membership queries.
This
resolves an open question in~\cite{CateDK13:learning}.
\end{remark}

\section{Conclusion}

We established a strong negative result on the efficient PAC learnability of classes of CQs that include all path-CQs. Although our result indicates that interesting
classes of CQs tend to not be efficiently PAC learnable,
from a theoretical perspective it would be interesting to work  towards a complete classification of classes of CQs that are (or are not) efficiently PAC learnable. On the positive side,
we showed that CQs and UCQs are efficiently PAC learnable with membership queries.



In the following, we discuss how one could try to overcome the  negative result by loosening the running time requirements.
%
A first observation is that while PAC learnability of (the class of all) CQs cannot be attained by a polynomial-time algorithm, PAC learning with \emph{only polynomial sample size} is always possible when more running time is granted. 
\revnote{Indeed, this approach has been successfully exploited in 
\cite{ourNewIJCAI} for PAC learning
unary tree-shaped CQs 
(over a schema that contains only unary and binary relations) 
with the help of a SAT solver.}

The fact that a PAC learning algorithm for CQs exists with polynomial sample size but super-polynomial running time, is not difficult 
to establish.
One can simply use an Occam algorithm that enumerates 
candidate CQs $q$ in the order of increasing size,
checks for each $q$ whether it fits the input examples $E$, and returns the first fitting CQ found. 
If a fitting CQ
exists, then there is one of 
size single exponential in $||E||$ \cite{CateD15}. We may thus
terminate (and return an arbitrary CQ) when that bound
is reached. The algorithm runs in double exponential time
even if we check in a brute-force way whether candidate CQs fit the input examples. 
The bound on the sample size stated after Theorem~\ref{def:occam}  applies despite the non-polynomial running time. We thus obtain
 a PAC algorithm with polynomial sample size and double exponential running time.
 
It is an interesting question whether and when a more modest superpolynomial running time suffices. In particular, one may 
consider running times that also depend on the 
target query $q$ rather than only on the input set of examples $E$.
From this perspective, the above algorithm attains running time $||E||^{O(|q|)}$ while time $f(q) \cdot \text{poly}(||E||)$ with $f$ a computable function would clearly be preferable.
This resembles fixed-parameter tractability (FPT) in the study of the parameterized complexity 
of query evaluation (with the size of the query being the parameter), so let us refer to it as \emph{FPT PAC learning}. 
To make this well-defined, it is convenient to view FPT PAC learning as a promise problem, meaning that the input examples are promised to have a fitting query from the considered class.\footnote{The lower bounds proved in this paper then no longer apply.} Alternatives are to treat the non-existing target query as being of size~1 (which is a strong requirement) and to grant unlimited running time in the case that there is no fitting query (declaring that case a corner case). 

In the setting of FPT PAC learning, classes of CQs of bounded
treewidth and (more generally) bounded submodular width should be expected to play a prominent role because these notions are tightly linked to CQ evaluation in FPT \cite{DBLP:journals/jacm/Grohe07,DBLP:journals/jacm/Marx13}. They  generalize all notions of acyclicity mentioned in this paper, such as $\alpha$-acyclicity. The exact same Occam algorithm 
described above yields that
for every \mbox{$k \geq 1$}, the class $C_k$ of CQs of submodular width at most $k$ is FPT PAC learnable  with polynomial sample size. 
%
This raises a number of questions: Is the class of all CQs FPT PAC learnable? If not, can we characterize the classes of  CQs that are? And how exactly does the running time of the
algorithms depend on the parameter?

\bibliographystyle{plain}
\bibliography{bib}

\begin{thebibliography}{10}

\bibitem{Angluin88}
Dana Angluin.
\newblock Queries and concept learning.
\newblock {\em Mach. Learn.}, 2(4):319–342, 1988.

\bibitem{AnthonyBiggs92}
Martin Anthony and Norman~L. Biggs.
\newblock {\em Computational Learning Theory: An Introduction}.
\newblock Cambridge University Press, 1992.

\bibitem{Arias2002:learning}
Marta Arias and Roni Khardon.
\newblock Learning closed {Horn} expressions.
\newblock {\em Information and Computation}, 178(1):214--240, 2002.

\bibitem{Blumer89:learnability}
Anselm Blumer, Andrzej Ehrenfeucht, David Haussler, and Manfred~K. Warmuth.
\newblock Learnability and the {Vapnik-Chervonenkis} dimension.
\newblock {\em J. ACM}, 36(4):929–965, 1989.

\bibitem{CateD15}
{Balder ten} Cate and V{\'{\i}}ctor Dalmau.
\newblock The product homomorphism problem and applications.
\newblock In {\em Proc.\ of {ICDT}}, pages 161--176. Schloss Dagstuhl -
  Leibniz-Zentrum f{\"{u}}r Informatik, 2015.

\bibitem{CateDK13:learning}
{Balder ten} Cate, V{\'{\i}}ctor Dalmau, and Phokion~G. Kolaitis.
\newblock Learning schema mappings.
\newblock {\em {ACM} Trans. Database Syst.}, 38(4):28:1--28:31, 2013.

\bibitem{tCD2022:conjunctive}
{Balder ten} Cate and Víctor Dalmau.
\newblock Conjunctive queries: Unique characterizations and exact learnability.
\newblock {\em {ACM} Trans. Database Syst.}, 47(4):14:1--14:41, 2022.

\bibitem{Cate2022:extremal}
{Balder ten} Cate, Víctor Dalmau, Maurice Funk, and Carsten Lutz.
\newblock Extremal fitting problems for conjunctive queries.
\newblock To appear in Proceedings of PODS 2023. Available at
  \url{https://arxiv.org/pdf/2206.05080.pdf}, 2022.

\bibitem{ourNewIJCAI}
{Balder ten} Cate, Maurice Funk, Jean~Christoph Jung, and Carsten Lutz.
\newblock {SAT}-based {PAC} learning of description logic concepts.
\newblock In {\em Proc.\ of {IJCAI})}, 2023.

\bibitem{ChenValeriote2019}
Hubie Chen and Matthew Valeriote.
\newblock Learnability of solutions to conjunctive queries.
\newblock {\em J. Mach. Learn. Res.}, 20(1):2422–2449, 2019.

\bibitem{Cohen93}
William~W. Cohen.
\newblock Cryptographic limitations on learning one-clause logic programs.
\newblock In {\em Proc.\ of AAAI}, pages 80--85. {AAAI} Press / The {MIT}
  Press, 1993.

\bibitem{DBLP:conf/colt/Cohen96}
William~W. Cohen.
\newblock The dual {DFA} learning problem: Hardness results for programming by
  demonstration and learning first-order representations (extended abstract).
\newblock In {\em Proc.\ of {COLT}}, pages 29--40. {ACM}, 1996.

\bibitem{Dalmau1999}
V{\'\i}ctor Dalmau.
\newblock A dichotomy theorem for learning quantified boolean formulas.
\newblock {\em Mach.\ Learn.}, 35(3):207--224, 1999.

\bibitem{DalmauJeavons2003}
V{\'\i}ctor Dalmau and Peter Jeavons.
\newblock Learnability of quantified formulas.
\newblock {\em Theor.\ Comput.\ Sci.}, 306(1):485--511, 2003.

\bibitem{Fagin83:acyclicity}
Ronald Fagin.
\newblock Degrees of acyclicity for hypergraphs and relational database
  schemes.
\newblock {\em J. ACM}, 30(3):514–550, 1983.

\bibitem{funk2019:msc}
Maurice Funk.
\newblock Concept-by-example in $\mathcal{EL}$ knowledge bases.
\newblock Master's thesis, University of Bremen, 2019.

\bibitem{Funk2019:when}
Maurice Funk, Jean Jung, Carsten Lutz, Hadrien Pulcini, and Frank Wolter.
\newblock Learning description logic concepts: When can positive and negative
  examples be separated?
\newblock In {\em Proc.\ of IJCAI}, pages 1682--1688, 08 2019.

\bibitem{DBLP:journals/jacm/Grohe07}
Martin Grohe.
\newblock The complexity of homomorphism and constraint satisfaction problems
  seen from the other side.
\newblock {\em J. {ACM}}, 54(1):1:1--1:24, 2007.

\bibitem{Haussler89}
David Haussler.
\newblock Learning conjunctive concepts in structural domains.
\newblock {\em Mach. Learn.}, 4:7--40, 1989.

\bibitem{DBLP:journals/iandc/HausslerKLW91}
David Haussler, Michael~J. Kearns, Nick Littlestone, and Manfred~K. Warmuth.
\newblock Equivalence of models for polynomial learnability.
\newblock {\em Inf. Comput.}, 95(2):129--161, 1991.

\bibitem{Hirata2005:prediction}
Kouichi Hirata.
\newblock Prediction-hardness of acyclic conjunctive queries.
\newblock {\em Theor.\ Comput.\ Sci.}, 348(1):84--94, 2005.

\bibitem{KearnsVazirani}
Michael~J. Kearns and Umesh~V. Vazirani.
\newblock {\em An Introduction to Computational Learning Theory}.
\newblock {MIT} Press, 1994.

\bibitem{DBLP:conf/ecml/Kietz93}
J{\"{o}}rg{-}Uwe Kietz.
\newblock Some lower bounds for the computational complexity of inductive logic
  programming.
\newblock In {\em Proc.\ of ECML}, pages 115--123. Springer, 1993.

\bibitem{DBLP:journals/jacm/Marx13}
D{\'{a}}niel Marx.
\newblock Tractable hypergraph properties for constraint satisfaction and
  conjunctive queries.
\newblock {\em J. {ACM}}, 60(6):42:1--42:51, 2013.

\bibitem{ChengWolf:1997}
Shan-Hwei Nienhuys-Cheng and Roland de~Wolf.
\newblock {\em Foundations of Inductive Logic Programming}.
\newblock Springer, 1997.

\bibitem{PittValiant88}
Leonard Pitt and Leslie~G. Valiant.
\newblock Computational limitations on learning from examples.
\newblock {\em J. ACM}, 35(4):965–984, 1988.

\bibitem{DBLP:journals/ml/Schapire90}
Robert~E. Schapire.
\newblock The strength of weak learnability.
\newblock {\em Mach. Learn.}, 5:197--227, 1990.

\bibitem{Valiant84:pac}
Leslie~G. Valiant.
\newblock A theory of the learnable.
\newblock {\em Commun.\ ACM}, 27:1134--1142, 1984.

\bibitem{Willard10}
Ross Willard.
\newblock Testing expressibility is hard.
\newblock In {\em Proc.\ of {CP}}, pages 9--23. Springer, 2010.

\bibitem{Yannakakis81}
Mihalis Yannakakis.
\newblock Algorithms for acyclic database schemes.
\newblock In {\em Proc.\ of VLDB}, page 82–94. VLDB Endowment, 1981.

\end{thebibliography}

\end{document}